\newcommand{\eqnref}[1]{\eqref{#1}}
\newcommand{\lib}[1]{{\em #1}}
\newcommand{\defeq}{=}
\DeclareMathOperator{\Div}{div}
\newcommand{\R}{\mathbb{R}}
\newcommand{\Lie}{\mathcal{L}}
\newcommand{\LL}{\mathcal{L}}
\newcommand{\spacetime}{\mathcal{N}^4}
\newcommand{\Slice}{\mathcal{S}}
\newcommand{\Surf}{\Sigma}
\newcommand{\Sout}{\Surf_\text{outer}}
\newcommand{\Sin}{\Surf_\text{inner}}
\newcommand{\Sone}{\Surf_A}
\newcommand{\Stwo}{\Surf_B}
\newcommand{\Sonetwo}{\Surf_{A,B}}
\newcommand{\rA}{\tilde r}
\newcommand{\SimA}{Sim1}
\newcommand{\SimB}{Sim2}
\newtheorem{theorem}{Theorem}[section]
\newtheorem{lemma}[theorem]{Lemma}
\newtheorem{rmk}[theorem]{Remark}
\newcommand{\QLM}{\mathrm{QLM}}
\newcommand{\QLE}{\mathrm{QLE}}
\newcommand{\const}{\mathrm{const}}
\newcommand{\MADM}{M^\mathrm{ADM}}
\newcommand{\BYM}{m_\mathrm{BY}}
\begin{document}


\title{Properties of Quasi-local mass in binary black hole mergers}

\author{Daniel Pook-Kolb}
\email{daniel.pook.kolb@aei.mpg.de}
\affiliation{Institute for Mathematics, Astrophysics and Particle Physics, Radboud University, Heyendaalseweg 135, 6525 AJ Nijmegen, The Netherlands}
\affiliation{Albert-Einstein-Institut, Max-Planck-Institut f{\"u}r Gravitationsphysik, Callinstra{\ss}e 38, 30167 Hannover, Germany}
\affiliation{Leibniz Universit{\"a}t Hannover, 30167 Hannover, Germany}

\author{Bowen Zhao}
\email{bowenzhao@bimsa.cn}
\affiliation{Beijing Institute of Mathematical Sciences and Applications, Beijing 101408,
China}

\author{Lars Andersson}
\email{lars.andersson@bimsa.cn}
\affiliation{Beijing Institute of Mathematical Sciences and Applications, Beijing 101408,
China}

\author{Badri Krishnan}
\email{badri.krishnan@ru.nl}
\affiliation{Institute for Mathematics, Astrophysics and Particle Physics, Radboud University, Heyendaalseweg 135, 6525 AJ Nijmegen, The Netherlands}
\affiliation{Albert-Einstein-Institut, Max-Planck-Institut f{\"u}r Gravitationsphysik, Callinstra{\ss}e 38, 30167 Hannover, Germany}
\affiliation{Leibniz Universit{\"a}t Hannover, 30167 Hannover, Germany}

\author{Shing-Tung Yau}
\email{styau@tsinghua.edu.cn}
\affiliation{Beijing Institute of Mathematical Sciences and Applications, Beijing 101408,
China}\affiliation{Yau Mathematical Sciences Center, Tsinghua
University, Beijing 100084, China}

\date{\today}

\begin{abstract}
  Identifying a general quasi-local notion of energy-momentum and
  angular momentum would be an important advance in general relativity with potentially important consequences for mathematical and astrophysical studies in general relativity.  
  In this paper we study a promising approach to this problem first proposed by Wang and Yau in
  2009 based on isometric embeddings of closed surfaces in Minkowski
  space.  We study the properties of the Wang-Yau quasi-local
  mass in high accuracy numerical simulations of the head-on
  collisions of two non-spinning black holes within full general
  relativity.  We discuss the behavior of the Wang-Yau quasi-local
  mass on constant expansion surfaces and we compare its behavior with
  the irreducible mass.  We investigate the time evolution of the
  Wang-Yau Quasi-local mass in numerical examples.  In addition we
  discuss mathematical subtleties in defining the Wang-Yau mass for
  marginally trapped surfaces.
\end{abstract}

\maketitle

\section{Introduction}
\label{sec:intro}

The quasi-local definition of energy-momentum remains one of the major
problems in classical general relativity
\cite{Penrose1982unsolved,Szabados:2009review}.  The goal is to find
appropriate notions of energy-momentum and angular momentum for
finite, extended regions of spacetime.  At spatial infinity and at
null infinity, there are well-established concepts of energy and
angular momentum.  The energy-momentum defined by
Arnowitt-Deser-Misner (ADM) \cite{Arnowitt:1959ah} at spatial infinity
measures the {\em total} energy in a spacetime, and it is conserved
and shown to be positive \cite{Schoen:1979zz,Witten:1981mf}.  The
Bondi energy is measured at null-infinity and satisfies appropriate
balance laws as gravitational radiation carries away energy and
angular momentum \cite{Bondi:1962px}.  Similarly, there are notions of
quasi-local energy and angular momentum and balance laws applicable
for black hole horizons
\cite{Ashtekar:2002ag,Ashtekar:2003hk,Gourgoulhon:2006uc}.  In
contrast, finding suitable analogous definitions for a finitely
extended body or for an arbitrary region in spacetime is still under
active research.

Finding appropriate quasi-local notions of energy momentum and angular
momentum would be desirable for various reasons.  For example, one
might expect that gravitational waves emitted from a given region in
spacetime would carry away energy thus leading to a corresponding
decrease in the quasi-local mass.  Such a link has been shown for the
Bondi mass and for black hole horizons, but is still not available for
general spacetime regions. Once fully understood, it could potentially
allow us to infer detailed properties of dynamical spacetimes in the
strong field region from gravitational wave observations, such as from
the merger of compact objects.  On the mathematical side, it is likely that appropriate quasi-local notions of energy and angular momentum would play an important role in providing a full proof of the Penrose inequality.  Similarly, as we shall discuss in this paper, quasi-local mass also plays an important role in the process of gravitational collapse and black hole formation via the hoop conjecture.

We expect quasi-local mass to be a flux type integral on a closed
space-like 2-surface $\Sigma$ which bounds a space-like hypersurface
$\Omega$. Since $\Omega$ is not unique in the sense of being bounded
by a given $\Sigma$, one would expect that a proper notion of
quasi-local mass should not depend on which specific $\Omega$ is
chosen.  Restricting ourselves to vacuum spacetimes, we can enumerate
some minimal requirements that any viable notion of quasi-local mass
$M_{(\Sigma)}$ should satisfy \cite{liuYau2006}:
\begin{itemize}
\item In flat Minkowski spacetime, $M_{(\Sigma)}$
should vanish.
\item In a curved spacetime, the quasi-local mass should be
  non-negative.
\item In the limit when $\Sigma$ approaches a sphere at spacelike
  infinity on an asymptotically flat slice, or a cross-section of null
  infinity, the quasi-local mass must approach the ADM
  mass or the Bondi mass, respectively.
\item When $\Sigma$ is an apparent horizon, the quasi-local mass must be bounded from below by the irreducible mass of
  $\Sigma$, i.e.\ $\sqrt{A_{\Sigma}/16\pi}$, where $A_\Sigma$ is the area of $\Sigma$.
\end{itemize}
In this work we shall investigate properties of the quasi-local mass
originally proposed by Wang \& Yau \cite{Wang:2008jy}; see also
\cite{liuYau2003}. There are several other proposed definitions of
quasi-local mass, energy-momentum and angular momentum in the
literature.  Some notable ones are due to Bartnik
\cite{Bartnik:1989zz}, Hawking \cite{Hawking:1968qt}, and Penrose
\cite{Penrose:1982wp}; see
\cite{Penrose1982unsolved,Szabados:2009review} for a review.

Based on a variational analysis of the action of General Relativity,
Brown \& York proposed a quasi-local energy arising as a boundary term
in the Hamiltonian \cite{Brown:1992br,LSYork:2007PRD}; see also
\cite{1997GReGr..29..307K,liuYau2003,liuYau2006}.
However, the Brown-York definition
depends explicitly on a choice of spacelike hypersurface $\Omega$ that
is bounded by the two-surface $\Sigma$ under
consideration. Specifically, the mean curvature of $\Sigma$ as
embedded in $\Omega$ appears in the Brown-York definition. Moreover, a
specific choice of unit lapse and zero shift is needed in relating the
Hamiltonian to the Brown-York mass. This rather arbitrary gauge-fixing
is undesirable in general relativity studies. Furthermore, the
Brown-York quasi-local mass can fail to be positive in general except
for the time symmetric case \cite{ShiTam2002}.  On the other hand,
there exist surfaces in Minkowski spacetime with strictly positive
Brown-York mass. These undesirable features are resolved in Wang \&
Yau \cite{wang:2009cmp} by further including momentum information (second fundamental form in the time direction) in their
definition. Indeed, Euclidean space can be regarded as the totally
geodesic space-like hypersurface of zero momentum in Minkowski
spacetime. While Brown \& York defined their reference surface by an
isometric embedding of $\Sigma$ into 3-dimensional Euclidean space
$\R^3$, Wang \& Yau defined their reference surface by an isometric
embedding into Minkowski space $\R^{3,1}$ directly. The positivity proof of Wang-Yau quasilocal mass is given along with the definition \cite{wang:2009cmp}. The new definition is proven to recover the ADM mass at spatial infinity \cite{WangYau:2010spatialinf} and the Bondi mass at null infinity \cite{ChenWangYau:2011nullinf}. Further, the small sphere limit is proven to recover the stress-energy tensor at the limiting point for a spacetime with matter fields and is related to the Bel–Robinson tensor at higher orders for vacuum spacetime. Along the same line, they also give a quasi-local definition for angular momentum and center of mass \cite{ChenWangYau:2015cmp}, which are proven to be supertranslation invariant \cite{chen2021evolution,chen2021supertranslation, chen2022supertranslation}.
We will review the definition of Wang \& Yau quasi-local mass below and compare with Brown
\& York when it is helpful. 

Besides the requirements enumerated above, additional properties would
be desirable when considering the dynamical aspects of general
relativity.  As mentioned earlier, for the Bondi mass at null
infinity, the Bondi mass loss formula shows that gravitational waves
carry away energy, leading to a decrease of the Bondi mass
\cite{Bondi:1962px}.  The flux of gravitational radiation is written
as a surface integral over cross-sections of null infinity, and is
manifestly positive.  Similarly, restricting ourselves to black hole
horizons and marginally trapped surfaces, similar balance laws with
positive fluxes can be shown, leading to a physical process version of
the area increase law
\cite{Ashtekar:2002ag,Ashtekar:2003hk,Gourgoulhon:2006uc}.  Extending
these considerations to a more general quasi-local setting would lead
one to conjecture that the emission (or absorption) of gravitational
radiation from a domain $\Omega$ could be written as a surface flux
integral over $\Sigma$, directly related to the decrease (or increase)
of the quasi-local mass.  At present we do not have a well defined
notion of such fluxes.  As a first step in this direction, in this
work we shall study the time evolution of Wang-Yau quasi-local mass,
henceforth denoted as QLM, in the context of a binary black hole
merger. This question is hard to answer analytically, and we resort
instead to high precision numerical simulations of the full Einstein
equations.  

The plan for this paper is the following.  The basics of the Wang-Yau
QLM and its properties are introduced in Sec.~\ref{sec:basics}.  We
shall consider the head-on collision of two non-spinning black holes
starting with time-symmetric initial data.  The initial data and our
numerical evolution scheme is described in \ref{sec:init-nr}.  Our
numerical implementation for calculating the Wang-Yau QLM, and
numerical convergence, are described in Sec.~\ref{sec:numerics}.  The
numerical results are presented in Sec.~\ref{sec:results} in three
steps.  First, Sec.~\ref{sub:time-symmetric} shows the results in the
initial data, i.e.\ with time symmetry.  As the evolution proceeds,
the later time slices are no longer
time-symmetric. Sec.~\ref{sub:non-time-symmetric} shows results for
non-time-symmetric slices and finally Sec.~\ref{sub:evolution}
presents the time evolution of the QLM and also an exploration of the
hoop conjecture in the context of the formation of the common horizon
in a black hole merger.  In the course of presenting the numerical
results, it will be clear that there are mathematical subtleties in
defining the QLM for a marginally trapped surface.  This will be
clarified mathematically in Sec.~\ref{sec:proofs} and will justify the
various choices made in the numerical work. Finally,
Sec.~\ref{sec:implications} will discuss some implications of our
results and suggestions for future work.

\section{Basic Notions}
\label{sec:basics}

The Wang-Yau quasi-local energy (QLE) associated with a suitable
surface is defined through anchoring the surface intrinsic geometry while comparing
the extrinsic geometry as embedded in the original spacetime $\spacetime$
versus that embedded in the flat Minkowski space $\R^{3,1}$.  Given a
spacelike two-surface $\Surf \subset \spacetime$ with induced metric
$\sigma_{ab}$, let $i_0:\Surf \hookrightarrow \R^{3,1}$ be an
isometric embedding into the Minkowski spacetime.  Fixing a unit,
future-pointing, timelike vector $T_0$ in $\R^{3,1}$, a one-to-one
correspondence between vector fields in $\spacetime$ and those in $\R^{3,1}$ is
built through the `canonical gauge' condition,
\begin{equation}\label{eq:canocinal_gauge}
    \langle H, \bar{e}_4 \rangle = \langle H_0, \check{e}_4 \rangle
    \;,
\end{equation}
where $H$ and $H_0$ are mean curvature vectors of $\Surf\subset \spacetime$ and $i_0(\Surf)\subset \R^{3,1}$,
respectively, and $\langle \,\cdot\, , \,\cdot\, \rangle$ denotes the
corresponding scalar product in $\spacetime$ and $\R^{3,1}$.
The basis vectors $\Bar{e}_\alpha$ of $\spacetime$, $\alpha\in \{0,1,2,3\}$,
and $\Check{e}_\alpha$ of $\R^{3,1}$ are chosen as follows.
Let $\check{e}_3$ be the spacelike unit normal of $i_0(\Surf)$ which is also
perpendicular to $T_0$. Let $\Check{e}_4$ be the future-pointing, timelike
unit normal that is perpendicular to $\Check{e}_3$. Then
$\{\Check{e}_3,\Check{e}_4\}$ forms an orthonormal basis for the normal bundle
of $i_0({\Surf})\subset \R^{3,1}$. The canonical gauge condition
\eqref{eq:canocinal_gauge} picks uniquely a future-pointing, timelike unit
normal of $\Surf$, $\bar{e}_4$. Then $\Bar{e}_3$ is the spacelike normal of
$\Surf$ that combined with $\Bar{e}_4$ gives an orthonormal basis for the
normal bundle of $\Surf \subset \spacetime$.

Given $\tau \in C^\infty(\Surf)$, a generalized mean curvature for $\Surf$ is defined as 
\begin{equation}
    \mathcal{H} = -\sqrt{1+|\nabla \tau|^2}\langle H,\Bar{e}_3\rangle - \alpha_{\Bar{e}_3}(\nabla \tau)
    \;,
\end{equation}
where $\nabla$ denotes the covariant derivative on $\Surf$ associated with $\sigma_{ab}$, $|\nabla \tau|^2 = \sigma^{ab} \nabla_a \tau \nabla_b \tau $ and we write
$\alpha_{\check{e}_3}(\nabla \tau) \defeq (\alpha_{\check{e}_3})_a \nabla^a \tau$.
The connection one-form $\alpha_{\Bar{e}_3}$ associated with the basis
$\{\bar{e}_3,\bar{e}_4\}$ is defined as
\begin{align*}
    \alpha_{\Bar{e}_3}(Y) &= \langle ^{(4)}\nabla_Y \bar{e}_3 \, , \,\bar{e}_4 \rangle
    \;,
\end{align*}
where $Y\in T\Sigma$ and $^{(4)}\nabla$ denotes the covariant derivative in $\spacetime$.
Similarly, one can define $\alpha_{\check{e}_3}$ for the connection one-form associated with $\{\check{e}_3,\check{e}_4\}$ in $\R^{3,1}$ as
\begin{align*}
    \alpha_{\check{e}_3}(Y) = \langle ^{(3,1)}\nabla_Y \check{e}_3 \, , \,\check{e}_4 \rangle
    \;,
\end{align*}
where $^{(3,1)}\nabla_Y$ denotes the covariant derivative in $\R^{3,1}$.
A generalized mean curvature for $i_0(\Surf)$ is defined as
\begin{equation}
    \mathcal{H}_0 = -\sqrt{1+|\nabla \tau|^2}\langle H_0,\check{e}_3\rangle - \alpha_{\check{e}_3}(\nabla \tau)
    \;.
\end{equation}
The Wang-Yau quasi-local energy associated with $\tau$ is then defined as
\begin{equation}
    \QLE (\tau) = \frac{1}{8\pi} \int_\Surf (\mathcal{H}_0-\mathcal{H}) \, dvol_\Surf
   \;.
\end{equation}
When the mean curvature vector $H$ is spacelike, one can use $H=-\langle H, \Bar{e}_4\rangle\, \Bar{e}_4+\langle H, \Bar{e}_3\rangle \,\Bar{e}_3 = p \,\Bar{e}_4 - k \,\Bar{e}_3$ and its conjugate vector $J=k \,\Bar{e}_4 - p\, \Bar{e}_3$ to form an orthonormal basis for the normal bundle $N\Surf$, $\{e_H=-\frac{H}{|H|},e_J=\frac{J}{|H|}\}$.
In terms of this mean curvature vector basis, 
\begin{alignat*}{2}
    &\QLE(\tau) =\\
    &\qquad \frac{1}{8\pi} \int_\Surf \Bigg\{&&\sqrt{1+|\nabla \tau|^2}\cdot (\cosh \theta_0 |H_0| -\cosh \theta |H|)  \\
    &&&- \nabla\tau \cdot \nabla(\theta_0-\theta) -  (\alpha_{H_0}-\alpha_H)(\nabla\tau)\Bigg\}
    \;,
\end{alignat*}
where $\theta$ denotes the hyperbolic angle between $\{\overline{e}_3,\overline{e}_4\}$ and $\{e_H=-\frac{H}{|H|},e_J=\frac{J}{|H|}\}$. Specifically,
\begin{equation}
    \begin{cases}
      \overline{e}_3 = \cosh\theta\, e_H - \sinh\theta\, e_J \\
      \overline{e}_4 = -\sinh\theta \, e_H + \cosh\theta\, e_J
    \end{cases}       
\end{equation}
and similarly for $\theta_0$ in $\R^{3,1}$.

Solving the variational problem of minimizing the QLE with respect to the time function $\tau$,
one gets the Euler-Lagrange equation, called the optimal embedding equation (OEE),
\begin{equation}\label{eq:OEE}
    \nabla_a j^a =0
    \;.
\end{equation}
The minimum value of $\QLE$ is defined to be the Wang-Yau quasi-local mass 
\begin{equation}\label{eq:QLM}
    \QLM = \frac{1}{8\pi} \int_\Surf  \rho + j^a \nabla_a \tau = \frac{1}{8\pi} \int_\Surf  \rho 
    \;,
\end{equation}
where (see (4.4)--(4.5) in \cite{Chen:2010tz}
for details)
\begin{equation}\label{eq:mass_density}
    \rho = \frac{\sqrt{|H_0|^2+\frac{(\Delta\tau)^2}{1+|\nabla \tau|^2}}-\sqrt{|H|^2+\frac{(\Delta\tau)^2}{1+|\nabla \tau|^2}}}{\sqrt{1+|\nabla \tau|^2}}
\end{equation}
and
\begin{equation}\label{eq:j_flux}
    j_a = \rho \nabla_a \tau -\nabla_a \sinh^{-1}\frac{\rho \Delta \tau}{|H_0||H|}-(\alpha_{H_0})_a +(\alpha_{H})_a
    \;.
\end{equation}
The Wang-Yau QLM is defined for any closed spacelike
surface $\Sigma$ whose mean curvature vector is spacelike and where an admissible
solution to the OEE \eqref{eq:OEE} exists (see Definition 5.1 in \cite{wang:2009cmp} for admissible $\tau$).

Note that if $\tau=\const$ is admissible and solves the optimal embedding equation, it must be the global minimum of Wang-Yau quasi-local energy \cite{ChenWangYau:2014cmp}. Substituting $\tau=\const$ to \eqref{eq:QLM}, one sees that the Wang-Yau quasi-local mass reduces to the Liu-Yau mass in this case \cite{liuYau2003,liuYau2006}
\begin{equation}\label{eq:LY-mass}
    \QLM = \frac{1}{8\pi} \int_\Surf |H_0|- |H| \;.
\end{equation}
If further $\Surf$ lies in a totally geodesic slice, 
the Wang-Yau quasi-local mass reduces to the Brown-York mass
\begin{equation}\label{eq:BY-mass}
    \BYM = \frac{1}{8\pi} \int_\Surf k_0- k \;,
\end{equation}
where $k$ is the only nonzero component of $H$ lying in the totally geodesic slice while $k_0$ is the mean curvature vector of $i_0(\Sigma)$ embedded in $\R^3$. Note that $\tau$ only appears through derivatives and we hence use $\tau=0$ and $\tau=\const$ interchangeably.

In black hole spacetimes, there is a particular set of surfaces of
interest---the marginally outer trapped surfaces (MOTSs).  These are
used to study various aspects of black holes quasi-locally via the
framework of isolated and dynamical horizons, respectively describing
black holes in equilibrium and in dynamical situations (see
e.g. \cite{Ashtekar:2004cn,Booth:2005qc,Gourgoulhon:2005ng,Hayward:2004fz,Jaramillo:2011zw}).
In any given spacelike slice $\Slice \subset \spacetime$, the
outermost MOTS is called the apparent horizon.  Given a MOTS on a
spacelike slice, the results of Andersson et
al. \cite{Andersson:2005gq,Andersson:2007fh,andersson2009area} show
the conditions under which it evolves smoothly.  It is shown that when
a MOTS is stable under outward deformations, then it will evolve
smoothly.  Recent numerical work has applied and further explored the
stability of MOTSs and its implications for the time evolution
\cite{Pook-Kolb:2019iao,Pook-Kolb:2019ssg,Pook-Kolb:2020zhm,Pook-Kolb:2021gsh,Pook-Kolb:2021jpd,Booth:2021sow}.

A MOTS is defined as follows. Let $\Surf \subset \spacetime$ be a closed spacelike
surface and let $\ell^+$, $\ell^-$ be two future directed null normal fields
on $\Surf$ taken to point outward and inward, respectively.
We fix the cross normalization by $\langle \ell^+, \ell^- \rangle = -2$, which
still leaves a remaining freedom to scale
\begin{align}\label{eq:ell-scaling}
    \ell^+ &\to f\ell^+ \;,
    &\ell^- &\to \frac{1}{f}\ell^-
\end{align}
for any positive function $f$.  The outward and inward null
expansions, denoted $\Theta_+$ and $\Theta_-$ respectively, are
defined as
\begin{equation}\label{eq:Theta}
    \Theta_{\pm} \defeq \sigma^{\alpha\beta}\  {}^{(4)}\nabla_\alpha \ell^{\pm}_\beta
    \;,
\end{equation}
where $\sigma^{\alpha\beta} \defeq \sigma^{ab} \pi^\alpha_a \pi^\beta_b$
with $\pi^\alpha_a$ the projection onto the tangent bundle of $\Surf$.
Then, $\Surf$ is called a
{\em marginally outer trapped surface} (MOTS) if $\Theta_{+} = 0$, an
{\em outer trapped surface} if $\Theta_{+} < 0$ and an
{\em outer untrapped surface} if $\Theta_{+} > 0$.
A {\em marginally trapped surface} (MTS) is a MOTS with $\Theta_{-} < 0$.
Note that although $\Theta_{+} \to f\Theta_{+}$ under \eqnref{eq:ell-scaling},
the signs of $\Theta_{\pm}$ are invariant and so these definitions are not
affected.

We will often consider families of surfaces $\Surf_s$ with constant
$\Theta_{+} = s \in \R$, called {\em constant expansion surfaces} (CESs).
These do depend on the choice of $\ell^\pm$, which we fix uniquely using the
spacelike slice $\Slice \subset \spacetime$ within which the family $\Surf_s$ is
constructed.
Concretely, let $\Surf_s \subset \Slice$,
$v$ the spacelike outward unit normal of $\Surf_s$ in $\Slice$
and $u$ the future timelike unit normal on $\Slice$ in $\spacetime$.
Then, we choose
\begin{equation}\label{eq:ell-choice}
    \ell^\pm = u \pm v \;.
\end{equation}
In terms of the null expansions $\Theta_{\pm}$, the mean curvature vector $H$
and its conjugate vector $J$ can be expressed as
\begin{align}\label{eq:HJ-via-Theta}
    H &= \frac{\Theta_{+} \ell^- + \Theta_{-} \ell^+}{2} \;,&
    J &= \frac{\Theta_{+} \ell^- - \Theta_{-} \ell^+}{2} \;.
\end{align}
Since $\langle H, H \rangle = - \Theta_{+} \Theta_{-}$, the mean curvature
vector becomes a null vector on a MOTS.
If in addition the slice $\Slice$ is time symmetric, i.e.\ its second fundamental
form vanishes, then $\sigma^{\alpha\beta}\ {}^{(4)}\nabla_\alpha u_\beta = 0$
and thus
$\Theta_{+} = - \Theta_{-}$
implying that $H$ and $J$ both vanish on a MOTS.

To characterize the quasi-local mass of a black hole region, we take the QLM of apparent horizons. However, the current definition of the Wang-Yau quasi-local mass assumes the surface mean curvature vector to be spacelike and hence does not apply to MOTSs. Therefore, one of our goals is to extend the definition of the
Wang-Yau QLM \eqnref{eq:QLM} to a MOTS in time symmetry and to an MTS without
time symmetry.
Limiting ourselves to the case of axisymmetry and no angular momentum, we will
argue that a suitable extension is
\begin{align}\label{eq:numerical-QLM-on-MOTS}
    \QLM &= \frac{1}{8\pi} \int_\Surf |H_0| &
    \text{with } \tau = \const
    \;.
\end{align}
With this extension, we can then investigate the time evolution of QLM
during black hole collisions.  As noted above if $\Sigma$ lies in a
totally geodesic slice, e.g.\ in the moment of time-symmetry, Wang-Yau
QLM reduces to Brown-York mass. In this case, the above extension
simply reduces to $\QLM=\frac{1}{8\pi}\int k_0$, which is what one
would expect for the Brown-York limit at minimal surfaces. Further extension of QLM to surfaces of timelike mean curvature vector, e.g. trapped surfaces inside event horizons, is certainly of great interest and will be studied elsewhere.

\section{Initial Data and numerical evolution}
\label{sec:init-nr}

We use Brill-Lindquist initial data \cite{BrillLindquist1963:PR}, which solves the
constraint equations of General Relativity with vanishing extrinsic curvature
and vanishing scalar curvature, i.e.\ a time-symmetric slice in vacuum spacetime.
The Riemannian three-metric is defined on $\R^3 \setminus \{x_1, \ldots, x_n\}$
with $n+1$ asymptotically flat ends, one at $\|x\| \to \infty$ and $n$ at the
punctures $x_i$.
We restrict ourselves to the case $n=2$, which describes a two-black-hole
configuration.
The three-metric can then be written as
\begin{equation}\label{eq:BL}
    h_{ij} = \Phi^4 \delta_{ij} \;,
\end{equation}
where $\delta_{ij}$ is the flat metric and the conformal factor is
\begin{equation}\label{eq:Phi}
    \Phi = 1 + \frac{m_A}{2|x-x_A|} + \frac{m_B}{2|x-x_B|}
    \;.
\end{equation}
We take the two punctures to be located on the $z$-axis at coordinates
$x_{A,B} = (0, 0, \pm d/2)$, respectively.
The three ends at $\|x\|\to\infty$, $x_A$ and $x_B$, respectively,
have ADM-masses
\begin{align}\label{eq:BL-ADM}
    \MADM &= m_A + m_B \;,\\
    \MADM_A &= m_A + \frac{m_A\, m_B}{2d} \;, \\
    \MADM_B &= m_B + \frac{m_A\, m_B}{2d} \;.
\end{align}

\begin{figure}
    \centering
    \includegraphics[width=0.9\linewidth]{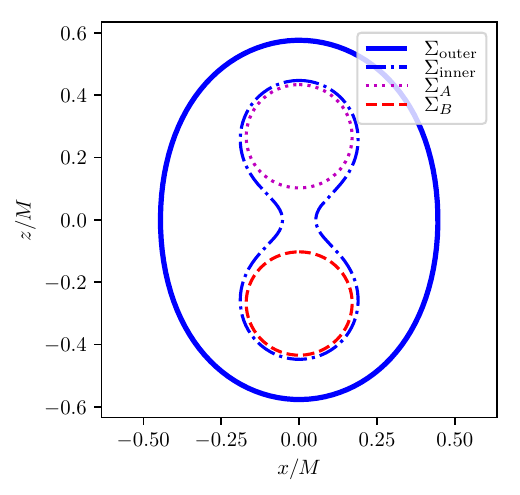}%
    \caption[]{\label{fig:MOTSs}%
        Common MOTS $\Sout$, inner common MOTS $\Sin$
        and the two individual MOTSs $\Sone$ and $\Stwo$ in Brill-Lindquist
        initial data with parameters $m_A=m_B=d=1/2$.
    }
\end{figure}

For sufficiently large $d$, the slice $\Slice$ contains two separate
black holes, each surrounded by a stable MOTS that contains either $x_A$ or $x_B$.
We shall call these the {\em individual MOTSs} $\Sone$ and $\Stwo$,
respectively.
If $d$ becomes small enough, there exists a stable common MOTS $\Sout$ surrounding
$\Sonetwo$.
In fact, as $d$ passes through the value at which $\Sout$ appears, it is found
that an unstable MOTS $\Sin$ forms together with $\Sout$ and ``moves'' inward as
$d$ is decreased. This is discussed in more detail elsewhere
\cite{Pook-Kolb:2018igu}.\footnote{%
    There is a large number of additional MOTSs in these data
    \cite{Pook-Kolb:2021gsh,Booth:2021sow,Pook-Kolb:2021jpd}, which are all
    found to be unstable.
    We will hence not discuss these surfaces in the present work.
}
The two common and two individual MOTSs for an equal mass configuration
are shown in Fig.~\ref{fig:MOTSs}.

The numerical data for this initial slice are generated by
the \lib{TwoPunctures} \cite{Ansorg:2004ds} thorn of the
\lib{Einstein Toolkit} \cite{Loffler:2011ay,EinsteinToolkit:web}.
These data are evolved in time using an axisymmetric version of
\lib{McLachlan} \cite{Brown:2008sb}, which in turn uses \lib{Kranc}
\cite{Husa:2004ip,Kranc:web} for generating C++ code.
This uses the BSSN formulation of the Einstein equations with gauge conditions
chosen as the so-called $1+\log$ slicing and a $\Gamma$-driver shift condition
\cite{Alcubierre:2000xu, Alcubierre:2002kk}.
More details about our numerical simulation setup, including a convergence
analysis, are described in \cite{Pook-Kolb:2019ssg}.

Our analysis is based on two simulations, both starting from BL data.
The first, referred to as {\SimA}, uses initial data with
$m_B/m_A = 2, d=0.9$ and the second,
simulation {\SimB}, uses $m_B/m_A = 1.6, d=1$.
Both simulations were performed with different
spatial grid resolutions to check the accuracy of our calculations.
Results shown for {\SimA} use a resolution $1/\Delta x = 720$, which was
evolved until simulation time $t_f = 6$.
For {\SimB}, we used a lower resolution of $1/\Delta x = 312$ to extend the
evolution up to time $t_f = 38$.

The MOTSs and CESs are numerically found with high accuracy both in the
analytical initial data as well as in slices produced by the
\lib{Einstein Toolkit} using the method in
\cite{Pook-Kolb:2018igu,pook_kolb_daniel_2021_4687700}.

In general, the problem of locating a surface $\Surf_s$ with expansion
$\Theta_{+} = s$ may have many solutions within a given slice $\Slice$.
For $s=0$, this corresponds to the different MOTSs in $\Slice$.
By choosing suitable initial guesses for the numerical search, we can easily
select which particular MOTS to find.
As mentioned above, we focus here on the three stable MOTSs $\Sout$, $\Sone$
and $\Stwo$, interpreted as the horizon of the merger remnant and the smaller
and larger (in case of unequal masses) individual black holes.
Choosing one of these MOTSs as initial guess, we construct CESs for $s$ close
to zero.
Families of $\Surf_s$ are then built by taking small steps in $s$, each time
using the previous CES as initial guess for the next.
CESs far from $\Sout$ in the nearly flat region of $\Slice$ are close to being
spherical in our coordinates and so we can use coordinate spheres as
initial guesses in this case.

\section{Numerical Method for Evaluating the Quasi-local Mass}
\label{sec:numerics}

The strategy to solve the optimal embedding equation $\nabla_a j^a=0$ is to
consider $\nabla_a j^a$ as a nonlinear operator $\LL$
acting on $\tau$, linearize that operator $\LL$ and solve the linear problem
multiple times, each time taking a small step towards a solution of the full
nonlinear problem.
This is also called the {\em Newton-Kantorovich} method
\cite[Appendix~C]{BoydBook}.

Analytically linearizing $\LL$ requires determining the explicit dependency of
$\nabla_a j^a$ on $\tau$.
We make use of axisymmetry, i.e.\ assuming the surface $\Surf$, the embedding $i_0$ and $\tau$ are all axisymmetric, to simplify calculations. The time function
$\tau$ then depends only on one parameter, say $\theta$, increasing from one
pole of $\Surf$ to the other, and the embedding $i_0$ can be expressed in
terms of the intrinsic metric $\sigma_{ab}$ and $\tau'$, where
$\tau' \defeq \frac{d\tau}{d\theta}$.
Explicitly, for coordinates $\{y^1 = \theta, y^2 = \phi\}$ on $\Surf$,
$0 < \theta < \pi$ and $0 < \phi < 2\pi$,
an axisymmetric ansatz for the embedding is
\begin{equation}\label{eq:embedding-ansatz}
    i_0(\theta, \phi) = \begin{pmatrix}
        \tau(\theta) \\ R(\theta) \cos\phi \\ R(\theta) \sin\phi \\ Z(\theta)
    \end{pmatrix}
    \;.
\end{equation}
Writing the two-metric as
$d\sigma^2 = P^2\, d\theta^2 + Q^2\sin^2\theta\, d\phi^2$,
we then have
\begin{align}\label{eq:embedding-sol}
    R^2 &= Q^2 \sin^2\theta \;, &
    Z'^2 &= P^2 - V'^2 + \tau'^2 \;,
\end{align}
where $V \defeq Q\sin\theta$.
Using this, we calculate $\langle H_0, H_0 \rangle = k_0^2 - p_0^2$ via
\begin{align}\label{eq:k0p0}
    k_0 &= \frac{VV''Z' - P^2 Z' + VV'Z''}{\sqrt{P^2+\tau'^2} P^2 V} \;,\\
    p_0 &= \frac{P'V\tau' - P(V\tau''+V'\tau')}{\sqrt{P^2+\tau'^2} P^2 V} \;.
\end{align}
Furthermore,
\begin{equation}\label{eq:alpH0}
    (\alpha_{H_0})_\theta = \frac{k_0p_0' - p_0k_0'}{|H_0|^2}
        + \tau' \frac{V'Z''-V''Z'}{P(P^2+\tau'^2)}
        \;.
\end{equation}

The respective terms in curved space are calculated differently using the
null expansions, which we have in highly accurate form from the MOTS and CES
finding process.
In addition to \eqnref{eq:HJ-via-Theta}, we use
\begin{equation}\label{eq:alpH-TpTm}
    (\alpha_H)_a = \frac{1}{2} \left(
        \frac{\Theta^+_{,a}}{\Theta_+} - \frac{\Theta^-_{,a}}{\Theta_-}
        - \ell_+^\mu \pi^\nu_a {}^{(4)}\nabla_\nu \ell^-_\mu
    \right)
    \;,
\end{equation}
where $\Theta^{\pm}_{,a} = \frac{\partial \Theta_{\pm}}{\partial y^a}$.
We remark that $\Theta_{\pm}$ contains first derivatives of the 3-metric,
which means that $\nabla_a j^a$ contains third derivatives.
In order to get numerically accurate results,
we expand $\Theta_{\pm}$ into a set of basis functions and
differentiate these directly.

\begin{figure}
    \centering
    \includegraphics[width=1.0\linewidth,valign=m]{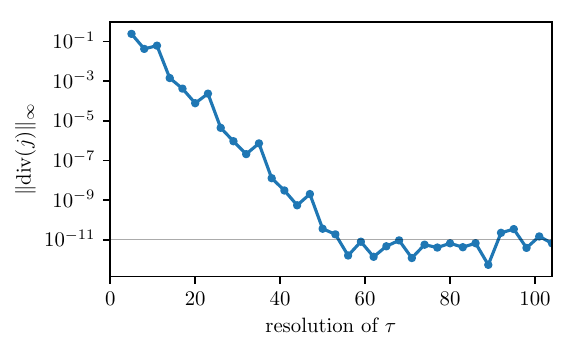}%
    \caption[]{\label{fig:convergence}%
        Maximum residual of the OEE \eqnref{eq:OEE} as the resolution of our
        representation of $\tau$ is increased.
        The error drops exponentially up to reaching a numerical roundoff at
        about $10^{-11}$.
        The case shown here is for a CES with $\Theta_{+} \approx 0.114$
        in a non-time-symmetric slice.
    }
\end{figure}

To linearize the operator $\LL \tau \defeq \nabla_a j^a$, the above
expressions are first inserted in turn into \eqnref{eq:mass_density},
\eqnref{eq:j_flux} and $\nabla_a j^a$.
Afterwards, we use
\lib{SymPy} \cite{meurer2017sympy}
to symbolically differentiate $\LL$ with respect to $\tau'$, $\tau''$,
$\tau^{(3)}$ and $\tau^{(4)}$.
In the end, the linearized operator we implement into our numerical code is of
the form
\begin{equation}\label{eq:linearized-op}
    (\delta\LL)\Delta
        = \sum_{n=1}^4 (\delta_{\tau^{(n)}} \LL) \frac{\partial^n \Delta}{\partial\theta^n}
        \;,
\end{equation}
where $\Delta$ is a scalar function on $\Surf$.
Starting with an initial guess $\tau_0$, usually $\tau_0 = 0$, we perform
steps $\tau_{i+1} = \tau_i + \Delta_i$, where $\Delta_i$ solves the linear
equation
\begin{equation}\label{eq:lienar-eq}
    (\delta\LL)\Delta_i = - \LL \tau_i
    \;,
\end{equation}
which we solve using a pseudospectral method.
In most cases, it took between $5$ and $15$ steps to converge up to
numerical roundoff.
Fig.~\ref{fig:convergence} shows that the residual of the OEE \eqnref{eq:OEE}
decreases exponentially with the resolution of $\tau$, where the resolution is
the number of basis functions used for the finite representation of $\tau$.

\section{Numerical Results}
\label{sec:results}

\subsection{The QLM in time-symmetric initial data}
\label{sub:time-symmetric}

For a time-symmetric slice $\Slice$, the mean curvature vector $H$ of $\Surf$
lies in $\Slice$. 
A MOTS therefore coincides with a minimal surface
($k=-\langle H, v\rangle = 0$, where, as before, $v$ is the
outward unit normal of $\Surf$ in $\Slice$).
Moreover, in time symmetry we have $\alpha_H=-\langle {}^{(4)}\nabla e_J, e_H \rangle \equiv 0$ since $e_H$ lies in $\Slice$ while $e_J$ is the normal to the totally geodesic slice $\Slice$.
And for $\tau = \const$, $i_0(\Sigma)\subset \R^3$, by a similar argument, $\alpha_{H_0} \equiv 0$.
Thus $\tau=\const$ trivially solves the
optimal embedding equation \eqnref{eq:OEE}.
Furthermore, $\tau=\const$ is known to be the global minimum of the QLE provided that it solves
the optimal embedding equation \cite{ChenWangYau:2014cmp}. The Wang-Yau
quasi-local mass reduces to the Brown-York mass $\BYM$ for any surface $\Surf$ in a moment of time symmetry.

\subsubsection{On the monotonicity along geometric flows}
\label{sub:monotonicity}

It is well known that for some cases, the Brown-York mass exhibits a monotonically decreasing behavior \cite{martinez:1994PRD}.
As an example, consider a Schwarzschild black hole.
In a time-symmetric slice, with the metric in isotropic coordinates
\begin{equation}\label{eq:SchwarzschildMetric}
    ds^2 = \left(1+\frac{m}{2r}\right)^4 \big(dr^2+ r^2 d\Omega^2\big)
    \;,
\end{equation}
the horizon lies at $r=m/2$.
One can show that $$\BYM\left(r=\frac{m}{2}\right) = 2m > \BYM(r=\infty) = m \;.$$
This is interpreted as negative gravitational field energy bringing down
$\BYM$ as the surface approaches infinity \cite{Szabados:2009review}.

In fact, this monotonicity property could be shown more precisely.
Consider two mean convex surfaces in $\Slice$, $\Surf_i=\partial \Omega_i$, $i\in\{1,2\}$, with $\Omega_1\subset \Omega_2$ and suppose there exists a geometric flow 
$$\frac{dF}{dt} = f v, \quad f>0$$
from $F(t_1)=\Surf_1$ to $F(t_2)=\Surf_2$.
Then it is proven \cite[Corollary~3.3]{MiaoShiTam:2010CMP} that
\begin{multline}\label{eq:BY-gravitational-energy}
    \BYM(\Surf_2)-\BYM(\Surf_1)\\ = \frac{1}{16\pi} \int_{\Omega_2\setminus\Omega_1} R + |B_0 -B|^2 - (k_0-k)^2
        \;,
\end{multline}
where $B$ and $B_0$ are the second fundamental forms of
$\Surf_t \subset \Slice$ and of $i_0(\Surf_t)\subset \R^3$,
respectively.  For a time-symmetric slice $\Slice$, the scalar curvature
$R=2 T_{00}$ and hence the $\int R$ term can be interpreted as matter
contribution, which in our case vanishes. The remaining
$\int |B_0 -B|^2 - (k_0-k)^2 $ term is then supposed to characterize
the pure gravitational field energy. Note that although the integrand
$|B_0-B|^2-(k_0-k)^2$ clearly depends on the foliation $\Sigma_t$, the
total integral does not. 
The work of Huisken \& Yau \cite{HuiskenYau:1996Inventions} and later improvements \cite{QingTian:2007uniqueness,metzger2007foliations, huang2010foliations,nerz2015foliations,EichmairKoerber:2022CMCfoliations} show that ends of an asymptotically flat Riemannian 3-manifold with positive ADM mass and non-negative scalar curvature admit a unique canonical foliation through stable constant mean curvature (CMC) surfaces. The geometric flow is assured in this case with $\Sigma_t$ being CMC surfaces.
Nonetheless, assume \eqref{eq:BY-gravitational-energy} holds true (in our case true numerically, see below), one can discuss the sign
  of the gravitation field energy term. For simplicity, take an
  orthonormal basis for $\Surf_t$, $\{e_1,e_2\}$, such that
  $\sigma_{ab}=\delta_{ab}$. This basis is also an orthonormal basis
  for the isometric embedding $i_0(\Surf_t)$. Then in this basis,
\begin{equation}
    |B_0-B|^2-(k_0-k)^2= -2 \,\mathrm{det} (B_0-B)
    \;,
\end{equation}
where $B_0-B=D^{\R^3}v_0 - D v:T\Sigma \to T\Sigma$ is regarded as a linear map in this ON basis, with $D^{\R^3}$ and $D$ denoting covariant differentiation in $\R^3$ and in Riemannian $\Slice$, respectively.
If $B_0-B$ can be chosen to be orientation preserving throughout a foliation $\Sigma_t$, then $\int_\Sigma |B_0-B|^2-(k_0-k)^2 <0$, i.e.\ it indicates negative gravitational field energy. Then $\BYM$ would be monotonically decreasing as the surfaces approach infinity. We suspect this is generally true for at least mean convex $\Surf$ but a proof is missing at the moment.

We examine the above equality \eqnref{eq:BY-gravitational-energy}
numerically.  Fig.~\ref{fig:CESs-QLM-AH-far} shows the QLM for CESs
with $\Theta_{+} \geq 0$. Outside the apparent horizon, the QLM
decreases monotonically with increasing distance to the horizon,
whereas the expansion $\Theta_+=k$ increases from $0$ at first (right
panel, upper part) and then drops back to $0$ at infinity (right
panel, lower part). The balance \eqref{eq:BY-gravitational-energy} is
also verified (see Fig.~\ref{fig:Miao-Shi-Tam}) although we cannot
prove for now that there exists a geometric flow among constant
expansion (mean curvature) surfaces in our case.

\begin{figure}
    \centering
    \includegraphics[width=1.0\linewidth,valign=m]{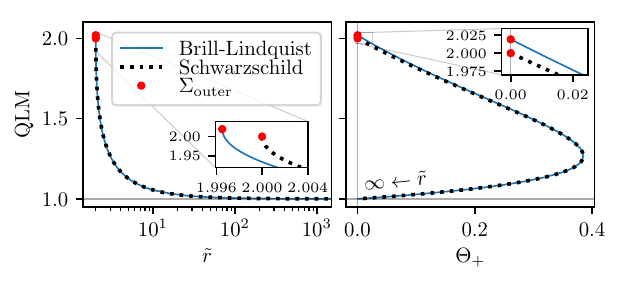}%
    \caption[]{\label{fig:CESs-QLM-AH-far}%
        QLM calculated for a family of CESs in BL data with
        $m_A = m_B = d = 1/2$,
        going from spatial infinity ($\QLM_\infty = 1$) to the common horizon
        $\Sout$ ($\QLM_\text{outer} \approx 2.019209822$).
        The dotted line shows the QLM for constant radius surfaces in the
        time-symmetric Schwarzschild slice for comparison.
        The left panel shows the QLM as function of the area radius $\rA$,
        defined by $4\pi \rA^2 = A$, where $A$ is the area, and the right panel as
        function of the expansion $\Theta_{+}$.
    }
\end{figure}

\begin{figure}
    \centering
    \includegraphics[width=1\linewidth,valign=m]{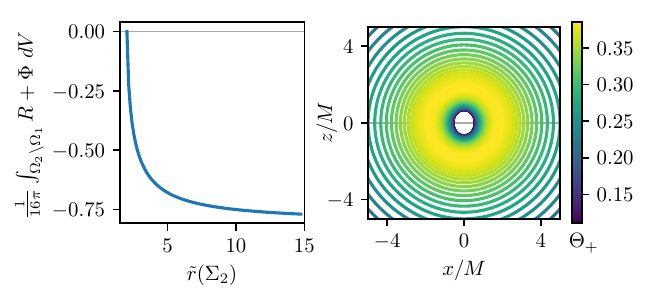}%
    \caption[]{\label{fig:Miao-Shi-Tam}%
        Numerical evaluation of \eqnref{eq:BY-gravitational-energy}.
        The left panel shows the numerical integral, where
        $\Phi \defeq |B_0 -B|^2 - (k_0-k)^2$.
        It agrees with the separately computed difference
        $\QLM(\Sigma_2)-\QLM(\Sigma_1)$ to within about $10^{-7}$.
        The $x$-axis represents the area radius of the outer surface
        $\Surf_2$, which is varied from agreeing with $\Surf_1$ to an almost
        spherical CES of area radius $\rA = 15$.
        The inner surface $\Surf_1$ is a CES outside the apparent horizon with
        expansion $\Theta_{+} = 0.1$.
        The right panel shows part of the CES family integrated over.
        The plots were produced with a Brill-Lindquist setup with
        $m_A = m_B = d = 1/2$.
    }
\end{figure}

\subsubsection{Outer trapped surfaces in time symmetry}
\label{sub:trapped}

\begin{figure}
    \centering
    \includegraphics[width=1.0\linewidth,valign=m]{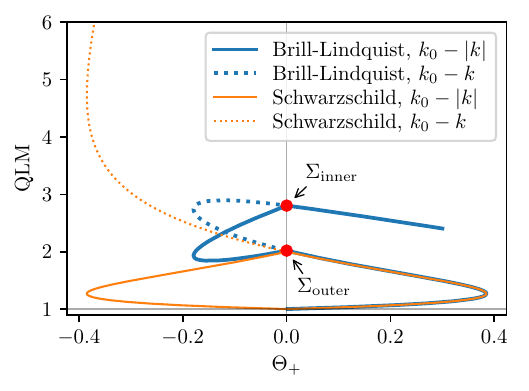}%
    \caption[]{\label{fig:BY-vs-WY}%
        Same as Fig.~\ref{fig:CESs-QLM-AH-far}, but continuing the families
        to the inside of the apparent horizon in both the BL data (blue) and in a
        Schwarzschild slice (orange).
        We show a comparison of the
        Brown-York mass \eqnref{eq:BY-mass}
        calculated either via $\frac{1}{8\pi}\int_\Surf k_0 - |k|$ (solid lines)
        or via $\frac{1}{8\pi}\int_\Surf k_0 - k$ (dotted lines).
        For $\Theta_{+} \geq 0$, the two definitions agree.
        In the Brill-Lindquist case, the CES family interpolates between the
        inner common MOTS $\Sin$ and the apparent horizon $\Sout$ via surfaces
        with $\Theta_{+} < 0$.
        However, these latter surfaces intersect each other (Fig.~\ref{fig:CESs-AH-inner}).
        Monotonicity in this regime can therefore not be expected. See text for discussion.
    }
\end{figure}

In time symmetry, we have $\Theta_{+} = - \Theta_{-}$ and, since
\begin{equation}\label{eq:k-of-Tpm}
    k = -\langle H, v\rangle = \frac{\Theta_+-\Theta_-}{2}
    \;,
\end{equation}
$k < 0$ everywhere for outer trapped surfaces ($\Theta_+<0$).
However, both the Wang-Yau QLE and the Brown-York mass implicitly make the
assumption that $k > 0$ and hence do not apply to such surfaces.
We argue that for $k=-|H|<0$ in the time-symmetric case, the same formula works with $k$ replaced by its absolute value.

We first exemplify this with a time-symmetric slice of the Schwarzschild
metric \eqref{eq:SchwarzschildMetric}.
Recall that there is an isometric inversion $r\to \frac{(m/2)^2}{r}$ that sends surfaces inside the horizon to the outside, reversing the sign of $H$ while preserving $H_0$ ($H_0$ only depends on the surface metric). Taking the absolute value of $k$, one has for a surface of constant radius
\begin{align}\label{eq:SchwarzschildQLM}
    \QLM&=\frac{1}{8\pi} \int k_0 - |k|
    =\begin{cases}
        m(1+\frac{m}{2r}) & r \geq \frac{m}{2}\\
        2r(1+\frac{m}{2r}) & r \leq \frac{m}{2} 
    \end{cases}
    \;.
\end{align}
This is shown in Fig.~\ref{fig:BY-vs-WY} (orange), where \eqnref{eq:SchwarzschildQLM}
is plotted together with the case of no absolute value taken (dotted),
and compared with the analogous Brill-Lindquist case (blue).
Note that $r$ denotes the isotropic radial coordinate.
The QLM attains its maximum value $2m$ at the horizon while
both the $r\to \infty$ and $r\to 0$ limit yield $\QLM=m$, consistent with the interpretation of a time-symmetric slice in Schwarzschild as a wormhole connecting two identical, asymptotically flat regions. More specifically, for $\Theta_{+} < 0$, the surface lies in the other asymptotic region, where its {\em outward} normal is $-v$. If we use $k < 0$ as-is, then the BY-mass remains monotonic and diverges as $\|x-x_i\|\to0$ (i.e.\ approaching the other end $x_i$).
In summary, a naive extension of Brown-York mass into the $k<0$ region results in a smooth QLM profile while taking absolute value yields a ``kink'' at the horizon. In this case, a non-smooth QLM profile ($k_0-|k|$) is clearly more natural than a smooth one ($k_0-k$) and one might in general expect some non-analytic behavior of QLM around MOTSs or horizons.

\begin{figure}
    \centering
    \includegraphics[width=1.0\linewidth,valign=m]{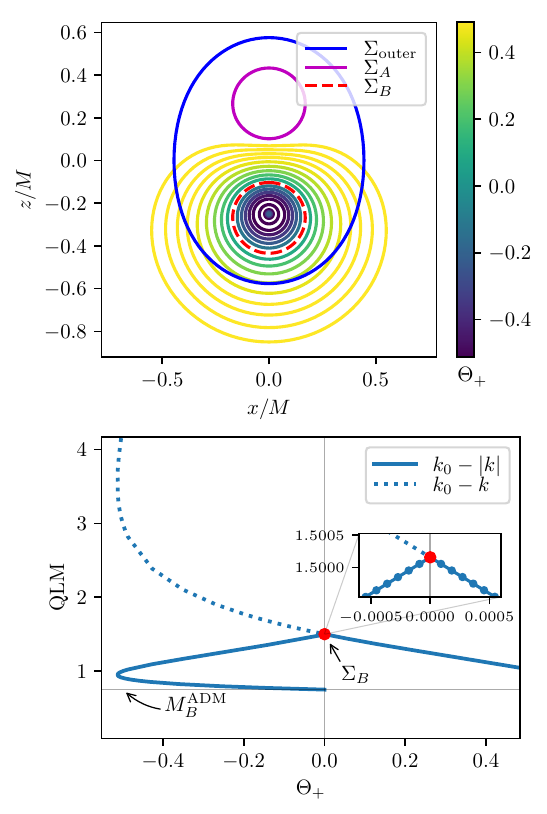}%
    \caption[]{\label{fig:CESs-QLM-S2}%
        Upper panel: CESs near the individual MOTS $\Stwo$ in BL data with
        $m_A = m_B = d = 1/2$.
        Lower panel: QLM calculated for these CESs. The inset shows a
        close-up near $\Theta_{+} = 0$.
        As in Fig.~\ref{fig:BY-vs-WY}, the dotted line shows
        \eqref{eq:BY-mass} calculated via
        $\frac{1}{8\pi}\int_\Surf k_0 - k$.
        The QLM asymptotes to the ADM mass $3/4$ of the end at $x_B$
        as the family approaches the puncture $x_B$.
    }
\end{figure}

For multiple black holes, taking the absolute value of $k$ yields results consistent with \cite{BrillLindquist1963:PR}. That is, for large spheres one recovers the ADM mass as expected while for small spheres approaching each puncture $x_i$, an asymptotic expansion yields the ADM mass associated to each puncture \eqnref{eq:BL-ADM}.
Numerical calculation for CES around $\Sout$ is also included in Fig.\ \ref{fig:BY-vs-WY} (blue) and reveals a similar behavior as in the Schwarzschild case up to $\Theta_+\approx -0.2$ where a turn-over happens in the multiple holes BL data (see below for discussion).
Numerical calculation for CES toward each puncture, i.e.\ around $\Sonetwo$, is shown in Fig.~\ref{fig:CESs-QLM-S2} and again reveals a similar behavior as in the Schwarzschild case.

We emphasize that in Brill-Lindquist data, only CES families outside $\Sout$ and inside each individual $\Sonetwo$ are comparable with the Schwarzschild case. 
The region bounded by $\Sout$ surrounding all punctures and $\Sonetwo$ surrounding each individual puncture does not have a direct correspondence in the Schwarzschild case.
The CES family interpolates between $\Sout$ and the
unstable common MOTS $\Sin$ and does not approach either of the two asymptotic
ends $x_{A,B}$ (Fig.~\ref{fig:CESs-AH-inner}).
Moreover, in this region, constant expansion or constant mean curvature (CMC) surfaces fail to foliate space and monotonicity of the QLM indeed fails here. These explain peculiar features around $\Sin$ in BL data seen in Fig.~\ref{fig:BY-vs-WY}. We remark that choosing other families of foliating surfaces such as coordinate spheres yields qualitatively similar conclusions.
The choice of families of CESs, which in time symmetry are constant mean
curvature surfaces, allows us to get arbitrarily close to a MOTS, which is a CES itself.

\begin{figure}
    \centering
    \includegraphics[width=1.0\linewidth,valign=m]{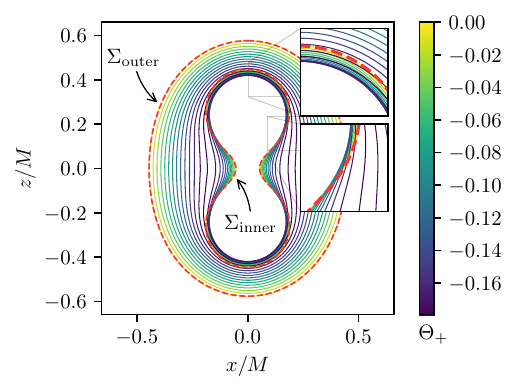}%
    \caption[]{\label{fig:CESs-AH-inner}%
        Family of CESs interpolating between $\Sout$ and $\Sin$ in
        BL data with parameters
        $m_A = m_B = d = 1/2$.
        Close to $\Sin$, members of this family intersect each other and hence fails to foliate the space in this interior region.
    }
\end{figure}

Our extension confirms that QLM at the MOTS
\eqref{eq:numerical-QLM-on-MOTS} should be
\begin{align}\label{eq:QLM-MOTS-Minkowski}
    \QLM = \frac{1}{8\pi}\int H_0 \ge 2\sqrt{\frac{|\Surf|}{16\pi }}
\end{align}
where $|\Surf|$ denotes the area of $\Surf$ and the Minkowski inequality is invoked. This is already assumed in earlier studies \cite{Szabados:2009review}.

\subsection{QLM in non-time-symmetric slices}
\label{sub:non-time-symmetric}

As one numerically evolves the time-symmetric initial data, the slices
$\Slice$ become non-time-symmetric and the mean curvature vector of a
surface $\Surf \subset \Slice$ may acquire a timelike component.
The Wang-Yau quasi-local mass will then in general differ from the
Brown-York mass.

\begin{figure}\centering
    \includegraphics[width=1.0\linewidth,valign=t]{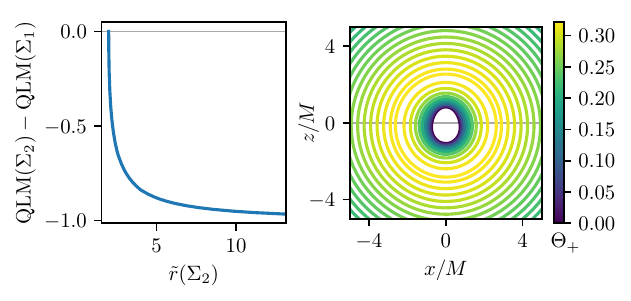}%
    \caption[]{\label{fig:BL31-QLM-MOTS-monotonic}%
        Monotonicity of the QLM along a CES family going outward from
        $\Surf_1 = \Sout$ at time $t=2.5\,M$ in simulation {\SimA}.
        The area radius $\rA(\Surf)$ increases monotonically along this
        family.
        The left panel shows the difference of the QLM between $\Sout$ and
        $\Surf_2$ as $\Surf_2$ is moved outward.
        The right panel shows the region around $\Sout$ foliated by the CES
        family.
    }
\end{figure}

This has various consequences, one being that the monotonicity of the QLM
along geometric flows is not guaranteed by \eqref{eq:BY-gravitational-energy}
anymore, although we numerically find monotonicity remains true, as can be seen in
Fig.~\ref{fig:BL31-QLM-MOTS-monotonic}.
An analytic generalization of \eqref{eq:BY-gravitational-energy} to
the non-time-symmetric case is under study.

\subsubsection{QLM at a MOTS without time symmetry}
\label{sub:QLM-at-MOTS}

In this section, we will numerically determine the QLM on CES families near
$\Sout$.
The goal is to justify the definition \eqref{eq:numerical-QLM-on-MOTS}
of the QLM on a MOTS $\Surf$ by
exploring its behavior as we approach $\Surf$ along the family
from the $\Theta_{+} > 0$ side, where the QLM is well defined.

Although $\rho$ \eqnref{eq:mass_density} and $\QLM$ \eqref{eq:QLM} seem well-defined even for $|H|\to 0$, the OEE that determines $\tau$ is clearly singular at a MOTS: $|H|$ appears as denominator in both \eqref{eq:j_flux} and \eqref{eq:alpH-TpTm}. Therefore, the definition of the QLM cannot trivially be extended to the case of a MOTS.
In other words, as $\Theta_+\to 0$, $|H|\to 0$, and so the assumption of having a spacelike
mean curvature vector in the Wang-Yau QLM breaks down.
We focus on examining this issue here numerically.
A mathematical treatment of this case will be given in Sec.~\ref{sec:proofs}.

\begin{figure*}\centering
    \includegraphics[clip,trim=0 20 0 0,width=0.8\linewidth,valign=t]{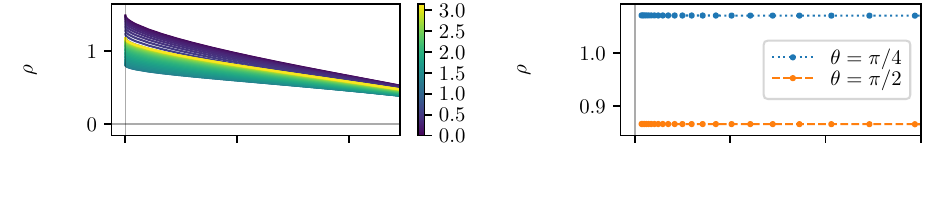}\\%
    \includegraphics[clip,trim=0 20 0 0,width=0.8\linewidth,valign=t]{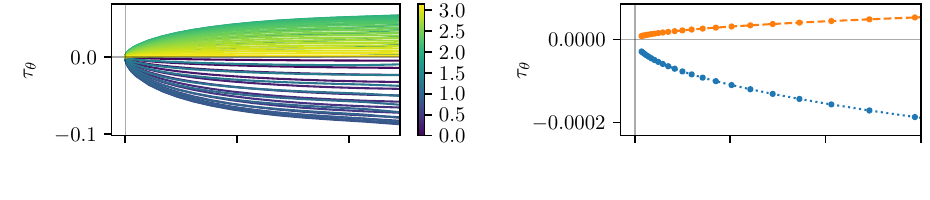}\\%
    \includegraphics[clip,trim=0 20 0 0,width=0.8\linewidth,valign=t]{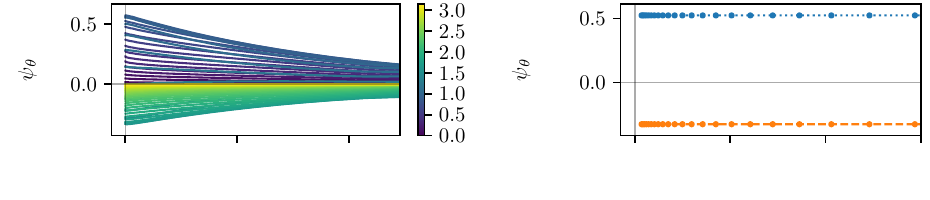}\\%
    \includegraphics[clip,trim=0 20 0 0,width=0.8\linewidth,valign=t]{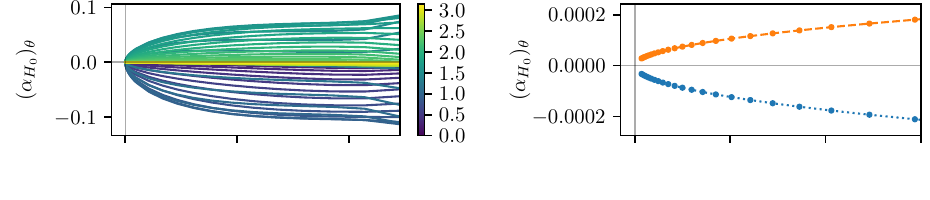}\\ %
    \includegraphics[clip,trim=0 20 0 0,width=0.8\linewidth,valign=t]{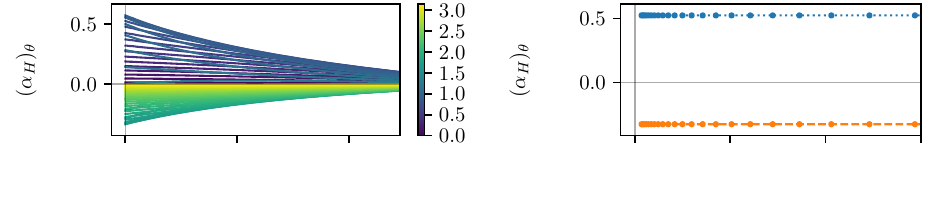}\\%
    \includegraphics[clip,trim=0  0 0 0,width=0.8\linewidth,valign=t]{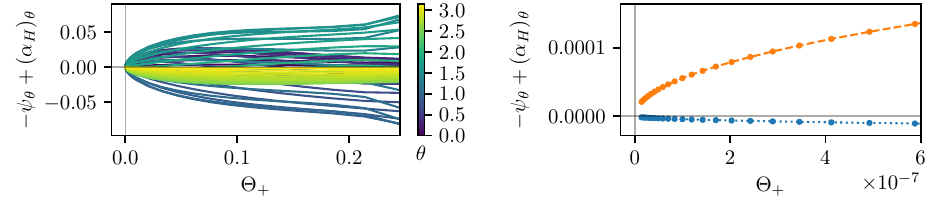}%
    \caption[]{\label{fig:BL31-MOTS-limit}%
        Terms appearing in \eqref{eq:j_flux}, i.e.,
        $j_\theta = \rho \tau_\theta - \psi_\theta - (\alpha_{H_0})_\theta + (\alpha_H)_\theta$,
        where
        $\psi \defeq \sinh^{-1}\left( \frac{\rho\Delta\tau}{|H_0|\,|H|} \right)$.
        They are plotted as function of $\Theta_{+}$ for CESs outside $\Sout$.
        For each curve, we fix the coordinate $\theta \in (0,\pi)$ along the
        surface and show its value as color.
        The right column is a close-up of $\Theta_{+}=0$ and depicts two
        individual values of $\theta$.
        The first five rows show the individual quantities whereas the final
        row shows $-\psi_\theta + (\alpha_H)_\theta$.
        The slice is from {\SimA} at time $t=2.5\,M$.
    }
\end{figure*}

\begin{figure}\centering
    \includegraphics[width=0.9\linewidth,valign=t]{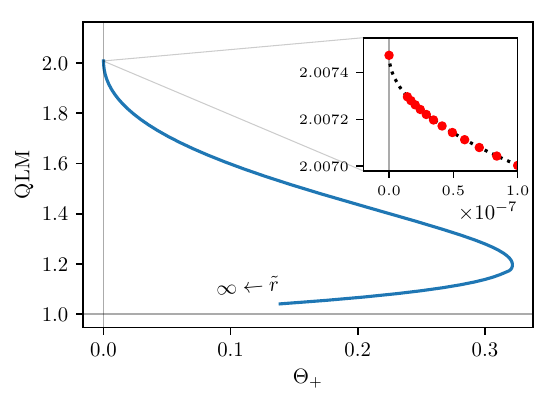}%
    \caption[]{\label{fig:BL31-QLM-extrapolate-MOTS}%
        QLM for CESs with $\Theta_{+}>0$ outside the MOTS $\Sout$.
        The value for $\Sout$ at $\Theta_{+} = 0$ is calculated using
        \eqref{eq:numerical-QLM-on-MOTS}.
        The dotted line in the inset interpolates the data points (red)
        and is obtained by fitting $c_1 - c_2 \sqrt{\Theta_{+}}$.
        The curve ends before reaching a QLM of $1$
        since the numerical slice does not extend to infinity.
        This is for simulation {\SimA} at time $t=2.5\,M$.
    }
\end{figure}

To numerically explore what happens as $\Theta_{+} \to 0$, we look at the
individual terms in the OEE \eqref{eq:j_flux} that determines $\tau$.
As can be seen in Fig.~\ref{fig:BL31-MOTS-limit},
$\rho$ remains finite while $\tau_\theta \defeq \tau'$ approaches zero, so
their product vanishes at a MOTS.
Since $\tau' \to 0$, $\tau\to \const$ and clearly $\alpha_{H_0}\to 0$.
The remaining terms $\psi_\theta$ and $\alpha_H$ both remain finite in the limit.
However, within numerical limits, they cancel in $j_\theta$ as $\Theta_{+}\to0$. The net result is thus $\tau\to 0$ is a solution to the OEE at a MOTS.

This suggests that as we approach a MOTS, $\tau$ becomes constant and $j_a$
vanishes, so that $\rho \to |H_0|$.
In terms of the QLM, this limit is shown in
Fig.~\ref{fig:BL31-QLM-extrapolate-MOTS} together with the value of the QLM
calculated at the MOTS using \eqref{eq:numerical-QLM-on-MOTS}.

\subsection{Time evolution of the QLM at a MOTS}
\label{sub:evolution}

Having argued that one can extend the Wang-Yau QLM to the common apparent
horizon and each individual horizon by \eqref{eq:numerical-QLM-on-MOTS}, we now
examine its time evolution during the head-on merger of two black holes.

If one interprets the Wang-Yau QLM as a measure to separate quasi-local
degrees of freedom from traveling gravitational waves, then
Fig.~\ref{fig:BL31-QLM-MOTS-AH-top-bot-evolution} indicates the region surrounded
by $\Sout$ loses energy/mass while sub-regions surrounded
by each individual horizon $\Sonetwo$ gain energy/mass during the collision. Furthermore, in the longer simulation {\SimB}, we find an oscillation in energy/mass contained inside the apparent horizon $\Sout$
(Fig.~\ref{fig:BL5-QLM-MOTS-evolution}). It is well known that for two black hole collisions, the intrinsic geometry of the apparent horizon experiences oscillations at the (lowest) quasi-normal frequency of the final black hole \cite{anninos:1994PRD}. While the apparent horizon area monotonically increases despite the oscillation at the quasi-normal frequency, $\QLM$ of the apparent horizon fails to maintain monotonicity with the oscillation. Nevertheless, as the final black hole settles down to equilibrium, the measure of $\QLM$ and area for the apparent horizon converges.

First of all, although the Wang-Yau quasi-local mass is defined for a 2-surface and does not depend on the choice of slicing $\Slice$, determining the apparent horizon through $\Theta_+$ does depend on the choice of slicing $\Slice$.
Bearing in mind this ambiguity associated with the apparent horizon, one tends to conclude that the Wang-Yau quasi-local mass suggests that the region bounded by the outermost MOTS $\Sout$
could lose energy to infinity.
This is a different picture from that indicated by the area of the horizons, which increase monotonically, and the standard first law of black hole thermodynamics. This difference is actually expected: equation (6.20) in \cite{Brown:1992br} indicates that the Brown-York quasi-local mass for a Schwarzschild black hole satisfies a balance equation that differs from the standard black hole thermodynamics. This again emphasizes that quasi-local mass as defined by Brown and York or Wang and Yau measures a different quantity than irreducible mass.
More specifically, as the horizon expands, more negative gravitational field energy is taken into account by $\QLM$ which counteracts the positive contribution from growing area and absorbed gravitational waves. This issue is crucial in understanding Wang-Yau quasi-local mass and will be carefully studied with a direct calculation of gravitational wave energy in a future study.

\begin{figure}\centering
    \includegraphics[width=0.9\linewidth,valign=t]{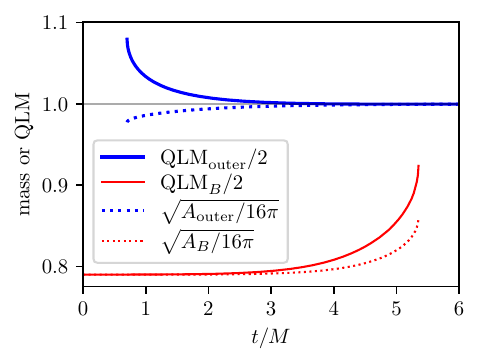}%
    \caption[]{\label{fig:BL31-QLM-MOTS-AH-top-bot-evolution}%
        QLM and area masses of $\Sout$ and $\Stwo$
        calculated using \eqref{eq:numerical-QLM-on-MOTS}
        in simulation {\SimA} as function of simulation time.
        The smaller horizon $\Sone$ (not shown) has a qualitatively similar
        behavior, though less pronounced, as $\Stwo$.
        For easier comparison, the QLM has been divided by $2$ to account for
        the fact that for Schwarzschild, the QLM of the horizon is twice the
        ADM mass.
    }
\end{figure}

\begin{figure}
    \includegraphics[width=1.0\linewidth,valign=t]{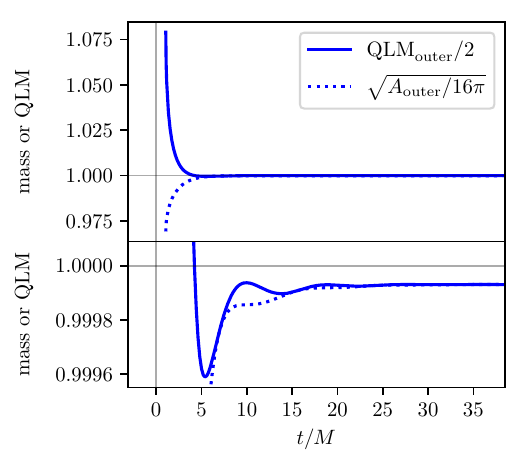}
    \caption[]{\label{fig:BL5-QLM-MOTS-evolution}%
        Evolution of the QLM calculated using \eqref{eq:numerical-QLM-on-MOTS}
        on $\Sout$ and the horizon's irreducible mass.
        This plot shows the longer simulation {\SimB}.
        At the final time,
        we have $\QLM_\text{outer}/2 \approx 0.9999303$
        and $\sqrt{A_\text{outer}/16\pi} \approx 0.9999309$.
        The lower panel is a close-up on the $y$-axis of the first panel.
    }
\end{figure}

\subsubsection{Investigating the hoop-conjecture}

A viable notion of QLM in general relativity should lead to numerous
applications.  We conclude this section on numerical results by
exploring one such application, namely the hoop conjecture
\cite{Thorne-Hoop,Senovilla:2007dw}.  This conjecture addresses the
question of under what conditions a black hole forms.  As we shall
shortly see, several aspects of the conjecture are not precisely
formulated and numerical relativity has a role to play in exploring
and evaluating various possibilities; see e.g. \cite{Ames:2023akp} for
recent work in this direction.

In the case of gravitational collapse, the intuitive picture is that
of matter fields getting compressed due to their self-gravity and
eventually becoming sufficiently dense to form a black hole.  As
originally formulated by Kip Thorne: Horizons form when and only when
a mass $M$ gets compacted into a region whose circumference
$\mathcal{C}$ in every direction does not exceed $4\pi GM/c^2$.  Thus,
in units with $G=c=1$, a horizon should form when, and only when,
$\mathcal{C}/4\pi M \lesssim 1$.  Here the notion of what one means by
mass is left vague, as is the space of curves (``hoops'') one should
use. Moreover, the value of 1 on the right hand side is motivated by
the Schwarzschild metric, and other numerical values might be
appropriate in general situations.

If a notion of QLM is generally viable, it should be possible to use
it as the appropriate mass within the hoop conjecture.  For the
Schwarzschild spacetime, as we have seen, the Wang-Yau QLM for the
horizon is twice the ADM mass. Thus, one might expect the relevant
hoop conjecture inequality should be modified to
$\mathcal{C}/4\pi M < 0.5$.  For the hoops, we shall use closed
geodesics lying within the constant expansion surfaces that we have
already found.  In our present case, we do not deal with gravitational
collapse, but rather with a binary black hole merger where we always
have black holes present on any time slice. We instead seek to
investigate the issue of when the common horizon forms, and whether
its formation can be predicted by a hoop conjecture argument using the
Wang-Yau QLM.  We assume further that the hoop conjecture applies to
the formation of marginally trapped surfaces.  In our case, the
constant expansion surfaces and the marginally trapped surfaces turn
out to be prolate so that the polar circumference $\mathcal{C}_{p}$ is
larger than the equatorial circumference.  Thus we calculate the ratio
$\mathcal{C}_{p}/4\pi M$ for constant expansion surfaces on different
time slices in the vicinity of the time when the common horizon is
first formed.

\begin{figure}
    \includegraphics[width=1.0\linewidth,valign=t]{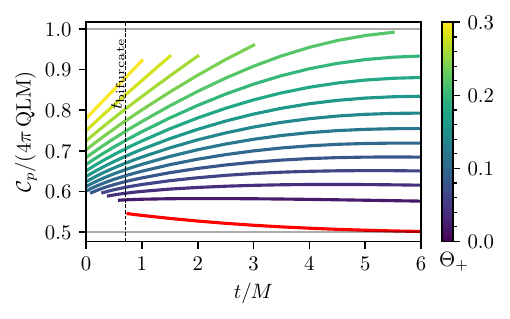}
    \caption[]{\label{fig:hoop}%
      Exploring the use of the Wang-Yau QLM in the hoop conjecture for
      the formation of the common horizon.  The figure shows the ratio
      $\mathcal{C}_{p}/(4\pi \times \mathrm{QLM})$ as functions of time
      for various constant expansion surfaces.  The curve in red
      refers to the common horizon while the other curves refer to
      positive values of the expansion; the color scale indicates the
      values of the expansion, which lie apart by $0.02$
      at the color scale's ticks.
      The time labeled as $t_{\rm bifurcate}$
      is when the common horizon is formed. }
\end{figure}
Our results are shown in Fig.~\ref{fig:hoop}. We see that the ratio
$\mathcal{C}_{p}/4\pi M$ approaches $0.5$ asymptotically as expected.
At earlier times, this ratio is somewhat larger than the limiting
value 0.5.  This is not unusual -- similar results were found in e.g.\
\cite{Ames:2023akp} where the ratio was about 12\% larger than the
limiting value predicted by the hoop conjecture. At each time just
before the horizon is formed, the value of $\mathcal{C}_p/4\pi M$
approaches the limiting value as the expansion becomes smaller.  These
results are suggestive but inconclusive -- it is not yet clear whether
this can be used as a prediction for the formation of the common
horizon.  This would require us to identify a suitable threshold for
$\mathcal{C}_p/4\pi M$ for these surfaces.  In the above discussion we
have considered only the constant expansion surfaces.  It is plausible
that there exist other surfaces which have a smaller value of
$\mathcal{C}_p/4\pi M$.  Therefore, while we shall not do so here, it
would be more satisfactory to consider a more general class of
2-spheres and to minimize the ratio $\mathcal{C}_{p}/4\pi M$ over
these spheres.  Following the spirit of the hoop conjecture, one could
then look for a threshold on this minimum value of
$\mathcal{C}_{p}/4\pi M$, and investigate whether it can predict the
formation of a black hole horizon.

\section{Defining the QLM on a MOTS}
\label{sec:proofs}

The above numerical results have already used a ``working definition''
\eqref{eq:numerical-QLM-on-MOTS} for evaluating the Wang-Yau QLM on a MOTS.
In this section, we explore the limit $\Theta_{+}\to0$ from a mathematical perspective to
justify this definition.
It seems plausible that an extension to the case with angular momentum is
possible.
However, this will be left to future work.

\begin{lemma}
Consider an axisymmetric collision with no angular momentum involved. Further assume isometric embedding or time function $\tau$ being axisymmetric, then $j\equiv 0$ when it is well-defined, i.e.\ when the mean curvature vector $H$ is spacelike.
\end{lemma}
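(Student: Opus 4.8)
\emph{Proof idea.} I read the hypothesis as referring to the optimal isometric embedding, i.e.\ to a solution $\tau$ of the optimal embedding equation \eqref{eq:OEE} --- this is the $\tau$ that enters the Wang--Yau QLM and hence the flux one-form $j$ of \eqref{eq:j_flux}. The plan is to reduce \eqref{eq:OEE} to an ordinary differential equation in the polar angle, solve it, and use regularity on the rotation axis together with the vanishing angular momentum to conclude $j\equiv0$.

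First I would dispose of the $d\phi$-component. In the coordinates $(y^1,y^2)=(\theta,\phi)$ of Sec.~\ref{sec:numerics}, write $j = j_\theta\, d\theta + j_\phi\, d\phi$. Axisymmetry makes $\rho$, $\tau$ and $\psi\defeq\sinh^{-1}\!\big(\rho\,\Delta\tau/(|H_0|\,|H|)\big)$ functions of $\theta$ alone, so the gradient terms $\rho\nabla_a\tau$ and $\nabla_a\psi$ of \eqref{eq:j_flux} have no $d\phi$-part; and the $\partial_\phi$-symmetry of the reference embedding \eqref{eq:embedding-ansatz} and the non-rotating physical geometry give normal-bundle connections with $(\alpha_{H_0})_\phi=(\alpha_H)_\phi=0$. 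Hence $j_\phi\equiv0$, and it remains to show $j_\theta\equiv0$.

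Then I would write the induced metric as $d\sigma^2 = P^2\,d\theta^2 + Q^2\sin^2\theta\, d\phi^2$ as in \eqref{eq:embedding-sol}, so that $\sqrt{\det\sigma}=PQ\sin\theta$, $j^\theta = j_\theta/P^2$ and (by axisymmetry) the $\phi$-divergence drops, giving
\begin{equation*}
  \nabla_a j^a \;=\; \frac{1}{PQ\sin\theta}\,\frac{d}{d\theta}\!\left(\frac{Q\sin\theta}{P}\,j_\theta\right).
\end{equation*}
Thus \eqref{eq:OEE} is equivalent to $\frac{Q\sin\theta}{P}\,j_\theta = C$ for a constant $C$. Letting $\theta\to0^+$ (the pole $\theta\to\pi^-$ is identical): $P$ and $Q$ remain positive and finite there while $\sin\theta\to0$, and each ingredient of $j_\theta = \rho\,\tau_\theta - \psi_\theta - (\alpha_{H_0})_\theta + (\alpha_H)_\theta$ stays bounded --- smoothness of $\tau$ and $\Surf$ forces $\tau_\theta = O(\theta)$ and hence bounds $\Delta\tau$, $\rho$, $\psi$ and their $\theta$-derivatives, while $(\alpha_{H_0})_\theta$ and $(\alpha_H)_\theta$ are bounded as the $d\theta$-parts of smooth one-forms on $\Surf$, which one verifies from \eqref{eq:alpH0}, \eqref{eq:alpH-TpTm} and \eqref{eq:embedding-sol}. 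Hence $C=\lim_{\theta\to0^+}\frac{Q\sin\theta}{P}\,j_\theta=0$, so $j_\theta\equiv0$ on $(0,\pi)$ and, by continuity, on all of $\Surf$; together with $j_\phi\equiv0$ this gives $j\equiv0$, and then \eqref{eq:QLM} reduces to $\QLM=\frac{1}{8\pi}\int_\Surf\rho$. (Equivalently, $\nabla_a j^a=0$ makes $j$ coexact on $S^2$ and axisymmetry then forces $j_\theta=0$; either way the real content is the regularity of the ingredients of $j$ across the axis.)

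The step I expect to be the main obstacle is precisely this boundedness on the rotation axis: showing that $\psi_\theta$ and the connection components $(\alpha_{H_0})_\theta$, $(\alpha_H)_\theta$ do not blow up as $\theta\to0,\pi$, and, relatedly, that the weighted-divergence identity above holds up to the poles with no hidden distributional contribution at $\theta=0,\pi$. This is where both hypotheses are genuinely used: axisymmetry plus smoothness of $\Surf$ and of the optimal embedding makes the relevant tensors extend regularly across the axis, while the vanishing angular momentum removes the $d\phi$-terms that would otherwise survive. I note that nothing in the argument needs $H$ to be null or near-null --- only that $H$ is spacelike, so that $\psi$ and hence $j$ are defined, which is exactly the qualifier ``when it is well-defined'' in the statement.
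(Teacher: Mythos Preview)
Your proposal is correct and follows essentially the same route as the paper: first kill $j_\phi$ by axisymmetry and the absence of angular momentum (the paper does this via the $\phi\to-\phi$ reflection symmetry of the metric and the vanishing of the relevant Christoffel symbol), then integrate the OEE to $\tfrac{Q}{P}\sin\theta\,j_\theta=\const$ and use regularity at the poles to force the constant to zero. Your more explicit treatment of the boundedness of $j_\theta$ at $\theta\to0,\pi$ spells out what the paper compresses into the phrase ``$j$ is well-defined,'' and your coexactness remark parallels the paper's subsequent de~Rham observation.
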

\begin{proof}
Use the definition \eqref{eq:j_flux}. We first observe that under the above assumptions, $j_\phi\equiv 0$. It is easy to see that $\nabla _{\partial\phi}\tau =\nabla_\phi \bigg(\sinh^{-1} \frac{\rho \Delta \tau}{|H_0||H|}\bigg)=0$ by assumption. That connection one-forms vanish can be seen from computing Christoffel symbols, using subscripts $3,4$ for the $H$ and $J$ direction, 
\begin{align*}
    \Gamma_{\phi 3}^4
        &= \frac{1}{2}g^{4\mu}\Big(\partial_\phi(g_{3\mu})+\partial_3(g_{\phi \mu})-\partial_\mu (g_{\phi 3})\Big)\\
        &= \frac{1}{2} \,g^{4\mu}\, \partial_3(g_{\phi \mu})
\end{align*}
and noting that the spacetime under consideration possesses a symmetry $\phi\to -\phi$ such that cross terms in the metric involving $\phi$ all vanish. 

Next we show $j_\theta=0$. The most general axisymmetric metric for a 2-surface is
$$d\sigma^2 = r_0^2 P(r_0,\theta)^2 d\theta^2 + r_0^2 Q(r_0,\theta)^2 \sin^2\theta d\phi^2 \;.$$
Then 
\begin{align*}
    &\nabla_a j^a =
    \frac{1}{\sqrt{\sigma}}\partial_\theta (\sqrt{\sigma}\sigma^{\theta\theta}j_\theta)=   \frac{1}{\sqrt{\sigma}}\partial_\theta (\frac{Q}{P}\sin\theta\, j_\theta)=0\\
    &\frac{Q}{P}\sin\theta\, j_\theta = \const
    \;.
\end{align*}
Using that $Q, P$ do not vanish (metric non-degenerate), $\sin(0)=\sin(\pi)=0$ and that $j$ is well-defined, one gets $\const=0$ and hence $j_\theta\equiv 0$ on the surface.

\end{proof}
\begin{rmk}
    This may sound contradictory to the well-known fact that there are infinitely many non-vanishing, smooth, divergence-free vector fields on $S^2$. We elaborate on this. Denote the volume form by $\omega=\sqrt{\sigma} d\theta \wedge d\phi$. Take divergence-free vector field $j=j^\theta \partial_\theta$ with $j^\phi=0$. Then 
\begin{align*}
 \Div(j)\omega = \Lie_j\omega = d \iota_j \omega =0
 \;.
\end{align*}
Given that $H^1_{dR}(S^2) =0$, there exist a smooth function $f(\theta,\phi)$ such that $\iota_j \omega=df$. That is,
\begin{align*}
    r_0^2 PQ\sin\theta\,j^\theta d\phi=df=\frac{\partial f}{\partial \theta} d\theta + \frac{\partial f}{\partial \phi} d\phi 
    \;.
\end{align*}
It follows that $\frac{\partial f}{\partial \theta}=0$. Then noting that the RHS is only a function of $\phi$ while the LHS clearly has a dependence on $\theta$, it follows that $j^\theta=f=0$. We again reach $j\equiv 0$ when axisymmetry is imposed. 
\end{rmk}
\begin{rmk}
When angular momentum is present, the metric would have cross terms involving $\phi$ and the connection one-form $\alpha_H$ would generally not vanish ($\alpha_{H_0}=0$ since the reference spacetime is static). In fact, the quasi-local angular momentum as defined in \cite{ChenWangYau:2015cmp} vanishes exactly when $j^\phi=0$, consistent with our proof above. Recall that quasi-local angular momentum is defined as
$$E(\Surf,X,T_0,K=\partial_\phi)=-\frac{1}{8\pi}\int\,\langle K,T_0\rangle \rho + K^a j_a \;.$$
If one assumes $\Surf$ and $\tau$ both axisymmetric, $\langle K, T_0\rangle=0$. $E(\Surf,X,T_0,K=\partial_\phi)=0$ if and only if $j_\phi=0$.
\end{rmk}

\begin{theorem}
    Assume $H_0$ remains spacelike as $H$ turns into null or $|H|\to 0$, then the solution $\tau$ to the OEE approaches a constant as the surface approaches the apparent horizon. 
\end{theorem}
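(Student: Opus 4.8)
The plan is to reduce everything to the pointwise form of the OEE furnished by the preceding Lemma. Under axisymmetry with no angular momentum, and for axisymmetric $\tau$, that Lemma upgrades $\nabla_a j^a=0$ to $j\equiv0$ on each constant--expansion surface $\Surf_s$ (with $\Theta_+=s>0$), on which the OEE solution $\tau=\tau_s$ lives. Writing out the one nontrivial component of \eqref{eq:j_flux}, this is the pointwise relation
\begin{equation*}
\rho\,\tau_\theta \;=\; \partial_\theta\psi \;+\;(\alpha_{H_0})_\theta \;-\;(\alpha_H)_\theta,\qquad \psi:=\sinh^{-1}\!\Big(\tfrac{\rho\,\Delta\tau}{|H_0|\,|H|}\Big),
\end{equation*}
valid on $\Surf_s$. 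Since the hypothesis keeps $|H_0|$ bounded away from $|H|$, $\rho$ from \eqref{eq:mass_density} is strictly positive, so it suffices to show the right--hand side tends to $0$ as $s\to0^+$ (equivalently, as $\Surf_s\to\Sout$); that forces $\tau_\theta\to0$, hence $\tau_s\to\const$.

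The first step is to check that every term on the right stays bounded as $s\to0$. The key observation is that $\Theta_+\equiv s$ is constant along $\Surf_s$, so $\Theta^+_{,\theta}=0$ and the only genuinely singular term of \eqref{eq:alpH-TpTm}, the ratio $\Theta^+_{,\theta}/\Theta_+$, is simply absent; since $\Theta_-$ stays strictly negative near the MTS, the remaining pieces of $(\alpha_H)_\theta$ converge. For $\psi=\theta_0-\theta$ in terms of the boost angles, which satisfy $\sinh\theta=\Delta\tau/(|H|\sqrt{1+|\nabla\tau|^2})$ and likewise for $\theta_0$ with $|H|$ replaced by $|H_0|$, the identity $|H|=\sqrt{-s\,\Theta_-}$ gives $\partial_\theta\ln|H|=\tfrac12\partial_\theta\ln|\Theta_-|$ bounded, and a short computation then shows that although $\theta$ itself diverges like $\tfrac12\ln(1/s)$ wherever $\Delta\tau\neq0$, its tangential derivative $\partial_\theta\theta$ --- hence $\partial_\theta\psi$ --- stays bounded. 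Finally $(\alpha_{H_0})_\theta$ from \eqref{eq:alpH0} is an explicit expression in $\tau',\tau'',\tau'''$ and the surface metric that vanishes at $\tau=\const$, hence is controlled by a $C^3$ bound on $\tau_s$.

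The decisive step is to rule out $\tau_s\not\to\const$, and the mechanism is that $\psi$ becomes an extremely stiff function of $\Delta\tau$ as $|H|\to0$. Assume a uniform $C^3$ bound on $\tau_s$ along the family (this is the obstacle, see below). If $\tau_s\not\to\const$, then $\Delta\tau_s$ --- which has zero average on the topological sphere $\Surf$ and is not identically zero --- vanishes at some point $\theta^0_s$ where, along a subsequence, its first derivative does not tend to $0$ (degenerate zeros only make the following blow-up milder but still unbounded); but near such a zero the argument of $\sinh^{-1}$ in $\psi$ is $O(1)$, so $\partial_\theta\psi_s\sim\rho\,\partial_\theta\Delta\tau_s/(|H_0|\,|H_s|)$ blows up like $|H_s|^{-1}$ there, contradicting the relation of step~1, whose right--hand side is bounded by the $C^3$ bound. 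The only way out is $\|\Delta\tau_s\|_{\infty}=O(|H_s|)\to0$, whence $\tau_s\to\const$ on $\Surf$. One may view this as the quantitative version of the numerically observed cancellation of $-\partial_\theta\psi$ against $(\alpha_H)_\theta$ in Fig.~\ref{fig:BL31-MOTS-limit}, which, through the relation of step~1, is in fact equivalent to $\tau_s\to\const$ and hence confirms but does not by itself prove the statement.

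I expect the main obstacle to be precisely the uniform $C^3$ (or $C^{1,\alpha}$, or even $L^\infty$ on $\Delta\tau_s$) control of $\tau_s$ along the family. For $s$ bounded away from $0$ this is standard elliptic regularity for the OEE, but the equation degenerates as $s\to0$ ($|H|$ divides in both \eqref{eq:j_flux} and \eqref{eq:alpH-TpTm}), so this bound does not come for free; one would have to extract it by hand, plausibly by bootstrapping the bounded--right--hand--side property of the step~1 relation --- itself a third--order ODE in $\tau$ on $(0,\pi)$ --- while handling the degeneracy of $\sigma^{\theta\theta}\sin\theta$ at the poles, or by exploiting the minimality of $\tau_s$ together with the bound $\QLM(\Surf_s)\le\tfrac{1}{8\pi}\int|H_0|-\tfrac{1}{8\pi}\int\sqrt{-s\,\Theta_-}$. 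A secondary subtlety is ruling out degenerate zeros of $\Delta\tau_s$ in the limit; otherwise everything is bookkeeping resting on the Lemma.
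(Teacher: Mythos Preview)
Your approach is essentially the paper's: use the Lemma to get $j\equiv 0$ pointwise, then show that three of the four terms in $j_\theta$ stay bounded as $\Theta_+\to 0$ (in particular $(\alpha_H)_\theta$ via $\Theta^+_{,\theta}=0$ on a CES, and $(\alpha_{H_0})_\theta$ and $\rho\nabla\tau$ via control on $\tau$), forcing the fourth term $\partial_\theta\psi$ to be bounded --- which is only compatible with $\Delta\tau\to 0$, hence $\tau\to\const$. You supply more detail at the last implication (the zero-of-$\Delta\tau$ blow-up argument) where the paper simply asserts ``this is only possible when $\Delta\tau\to 0$ as $|H|\to 0$''. The paper then adds a second, verification-style paragraph recasting $j$ in the $\{\bar e_3,\bar e_4\}$ frame to explain \emph{why} $\tau=\const$ solves the OEE at the horizon; you do not have an analogue of this, but it is not needed for the stated claim.

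The gap you flag --- the uniform a priori $C^3$ (or even $C^2$) bound on $\tau_s$ as $s\to 0$ --- is real, and the paper's proof has it too, only glossed over: its opening remark that $H_0$ spacelike ``implies that $\tau$ is well behaved'' is used precisely to bound $\rho\nabla\tau$ and $\alpha_{H_0}$, but no argument is given for why OEE solutions do not degenerate before the limit. So your proposal is at least as complete as the paper's own proof. One minor presentational point: your announced plan (``show the right-hand side tends to $0$, which forces $\tau_\theta\to0$'') does not match your execution (assume $\tau_s\not\to\const$ and derive a contradiction from the blow-up of $\partial_\theta\psi$ at a nondegenerate zero of $\Delta\tau$); the latter is the correct route and is what the paper does implicitly.
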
  

\begin{proof}
Note that $H_0 = \hat{k} \check{e}_3 + \frac{\Delta\tau}{\sqrt{1+|\nabla \tau|^2}} \check{e}_4$,
where $\hat{k}$ is the mean curvature of the projected surface $\pi(i_0(\Sigma))$ in $\R^3$, with $\pi:\R^{3,1}\to \R$ being projection along $T_0$. Assuming $H_0$ remains spacelike thus implies that $\tau$ is well behaved since otherwise $\frac{\Delta\tau}{|\nabla\tau|}\to \infty$ and $H_0$ is surely timelike.

Having proved that $j\equiv 0$ as $|H|\to 0$, we next examine \eqref{eq:j_flux} term by term to show that $\nabla_a \big(\sinh^{-1} \frac{\rho\Delta \tau}{|H||H_0|}\big)$
needs to be bounded, which in turn leads to $\Delta\tau\to 0$ and hence $\tau \to \const$ as $|H|\to 0$.

First look at the $\alpha_H$ term. Use \eqref{eq:alpH-TpTm} and the fact that the family of $\Surf$ is constant expansion surface with $\partial_a \Theta_+ \equiv 0$,
\begin{equation*}
    \begin{split}
        \alpha_H 
        =\frac{1}{2} \left(-\frac{\partial_a \Theta_-}{\Theta_-}-\langle ^{(4)}\nabla_a \ell_-, \ell_+\rangle \right)
    \end{split}
\end{equation*}
and hence bounded as $\theta_+ \to 0$.

Next look at the $\alpha_{H_0}$ term. Using $\check{e}_4=\frac{T_0+\nabla \tau}{\sqrt{1+|\nabla \tau|^2}}$ and $\langle ^{(4)}\nabla_a \check{e}_3, T_0 \rangle=0$
$$\alpha_{H_0}(e_a) = \langle ^{(3,1)}\nabla_a \check{e}_3, \check{e}_4 \rangle = \Romannum{2}\left(e_a,\frac{\nabla\tau}{\sqrt{1+|\nabla \tau|^2}}\right)\;,$$
where $\Romannum{2}$ is the second fundamental form of the cylinder spanned by $i_0(\Surf)$ and $T_0$  in $\R^{3,1}$. Therefore $\alpha_{H_0}$ is also bounded as $\theta_+ \to 0$.

Lastly, note that
\begin{align*}
    \rho \nabla \tau
        =& \left(\sqrt{|H_0|^2+\frac{(\Delta \tau)^2}{1+|\nabla \tau|^2}}
            -\sqrt{|H|^2+\frac{(\Delta \tau)^2}{1+|\nabla \tau|^2}}\,\right)
        \\&\qquad\times
        \frac{\nabla\tau}{\sqrt{1+|\nabla\tau}|^2}
\end{align*}
also remains bounded as $\theta_+ \to 0$. 

Putting together, one sees that $\nabla_\theta \left(\sinh^{-1} \frac{\rho\Delta \tau}{|H||H_0|}\right)$ has to remain finite for $j\equiv 0$ as $|H|\to 0$. This is only possible when $\Delta\tau \to 0$ as $|H|\to 0$, i.e.\ $\tau \to \const$. 

To gain more understanding about the limit $\tau\to \const$,
we use another formula for $j$ that does not invoke picking the specific frame $\{e_H,e_J\}$ 
$$j=\rho\nabla\tau -\alpha_{\check{e}_3}+\alpha_{\overline{e}_3}\;.$$
Indeed,
\begin{align*}
    -\alpha_{H_0}+\alpha_{H} &- \nabla  \left(\sinh^{-1} \frac{\rho\Delta \tau}{|H||H_0|}\right)\\
        &= -\alpha_{H_0}-\nabla \psi_0 +\alpha_H +\nabla \psi\\
        &= -\alpha_{\check{e}_3}+\alpha_{\overline{e}_3}
        \;.
\end{align*}

Imposing $\tau=\const$, $\rho\nabla\tau=0$
and $\alpha_{\check{e}_3} = \langle ^{(3,1)}\nabla  \check{e}_3, \check{e}_4 \rangle=0$ , recalling that $\check{e}_4=\frac{T_0+\nabla \tau}{\sqrt{1+|\nabla \tau|^2}}$.
One is left with $\alpha_{\overline{e}_3}$ only.
Recall that $\overline{e}_3$ is the `spacelike' unit normal chosen by the gauge condition $$\langle \overline{e}_4, H\rangle=\langle \check{e}_4, H_0\rangle\;,$$
which vanishes for $\tau=\const$ or $i_0(\Surf)\subset \R^3$. When $H$ is spacelike, this gauge condition picks $\overline{e}_3\propto H$ and $\alpha_{\overline{e}_3}=\alpha_H$ which in general do not vanish. So $\tau=\const$ does not solve OEE in general.

Now consider the limit $\Theta_+\to 0$.
Since $$H = \frac{\Theta_+ \ell_- + \Theta_- \ell_+ }{2}, \quad J = \frac{\Theta_+ \ell_- - \Theta_- \ell_+ }{2} $$
as $\Theta_+\to 0$, $H$ and $J$ both turn to null vectors along $\ell_+$. The gauge condition $\langle \overline{e}_4, H\rangle=0$ forces $\overline{e}_4,\overline{e}_3 \propto \ell_+$. Thus $\alpha_{\overline{e}_3}=\langle \nabla \overline{e}_3, \overline{e}_4\rangle =0$ and $j=0$ is satisfied.
\end{proof}

\begin{rmk}
The above proposition shows that one can extrapolate the definition of the QLM to $|H|=0$ with the optimal embedding being $\tau =\const$. In this case, i.e.\ $\tau=\const$ and $|H|=0$,
    \begin{align*}
        \QLM &=\frac{1}{8\pi}\int \rho\\
        &= \frac{1}{8\pi}\int \frac{\sqrt{|H_0|^2+\frac{(\Delta \tau)^2}{1+|\nabla \tau|^2}}-\sqrt{|H|^2+\frac{(\Delta \tau)^2}{1+|\nabla \tau|^2}}}{\sqrt{1+|\nabla\tau}|^2}\\
        &= \frac{1}{8\pi}\int |H_0|
        \;.
    \end{align*}
\end{rmk}

\begin{rmk}
    As proved in \cite{ChenWangYau:2014cmp}, a solution $\tau$ to the optimal embedding equation is a local minimum of Wang-Yau quasi-local energy if $$|H_{\tau}|>|H| >0 \;,$$ where $H_\tau$ is the mean curvature vector of the isometric embedding with time function $\tau$. The condition is satisfied for constant expansion surfaces with $\Theta_+>0$ (Fig.~\ref{fig:BL31-MOTS-limit-H}).
    Assume continuity, $\tau=\const$ is a local minimum of Wang-Yau quasi-local energy at the apparent horizon.
\end{rmk}

\begin{figure}\centering
    \includegraphics[width=1.0\linewidth,valign=t]{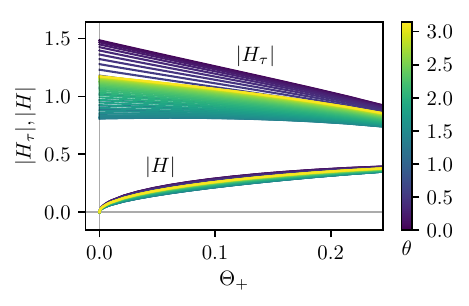}%
    \caption[]{\label{fig:BL31-MOTS-limit-H}%
        Value of $|H_\tau|$ and $|H|$ along the CESs outside
        $\Sout$ at simulation time $t=2.5\,M$ in {\SimA}.
        As Fig.~\ref{fig:BL31-MOTS-limit},
        the colors indicate the coordinates on the surface.
    }
\end{figure}
The numerical calculations in Sec.~\ref{sub:QLM-at-MOTS} comply with the above
results.

\section{Conclusions}
\label{sec:implications}

In this work, we have studied the Wang-Yau quasi-local mass during a
binary black hole merger.  The Wang-Yau QLM uses an embedding of
2-surfaces in Minkowski space. We have solved the optimal embedding
equation numerically and applied it to the head-on collision of two
non-spinning black holes starting with Brill-Lindquist initial data. We
numerically determined the QLM for surfaces close to the horizons and
for families of surfaces approaching the asymptotically flat ends and
study their time evolution, and also present a preliminary
investigation of the hoop conjecture applied to the formation of the
common horizon.  Finally, we have suggested an extension of the
Wang-Yau QLM to marginally trapped surfaces.

For a Schwarzschild black hole, our calculations agree with the well
known result of the Brown-York mass, i.e.\ the QLM is twice the ADM
mass on the horizon.
The Brown-York mass decreases monotonically as one moves outward from
the horizon, and for the sphere at infinity, it yields the ADM
mass. This is in sharp contrast to other quasi-local mass definitions
such as Hawking and Bartnik masses. The Wang-Yau quasi-local mass also
inherits this monotonic decreasing property. Such monotonicity is
clearly demonstrated numerically for our family of constant expansion
surfaces. Therefore, it is a crucial property of the Wang-Yau
quasi-local mass that some measure of negative gravitational field
energy is accounted for. In particular, for surfaces $\Surf$ in a
time-symmetric slice $\Slice$, an explicit expression for
gravitational field energy was written down
\cite{MiaoShiTam:2010CMP}. An analogous expression for the case of
non-time-symmetric slices is under study.

We have extended the definition of Wang-Yau quasi-local mass for
2-surfaces of spacelike mean curvature vector to 2-surfaces of null
mean curvature vector, i.e.\ \eqref{eq:numerical-QLM-on-MOTS}. With
this extension, we examined the time evolution of QLM for a black
hole, defined as QLM at $\Sout$, during the head-on collision of two
non-spinning black holes. As is well known, the area increases
monotonically throughout the evolution. At late stages, the area
evolution exhibits damped oscillations which are known to be
associated with the quasi-normal modes of the remnant black hole (see
e.g. \cite{Mourier:2020mwa}). In contrast, $\QLM$ decreases
monotonically at first and starts to lose monotonicity when
oscillations take place. We see from the bottom panel of
Fig.~\ref{fig:BL5-QLM-MOTS-evolution} that the oscillation frequency
of the QLM is similar to that of the area and one might expect these
to also be associated with the quasi-normal modes.  Monotonicity could
be included as one of the additional requirements for a QLM, and in
future work we shall investigate the possibility to modifying the
definition of the QLM appropriately to make it monotonic.

That QLM and area evolve differently seems to comply with a variation equation for the Brown-York mass in the Schwarzschild black hole case \cite{Brown:1992br}, which differs from the standard first law of black hole thermodynamics. One might again invoke negative gravitational field energy to explain this difference. Nevertheless, both of these measures, namely the area and half of $\QLM$ of the apparent horizon converge to the same value asymptotically as the final black hole settles down to its equilibrium state. We expect that employing estimates of gravitational wave energy will greatly clarify on various distinctive features
of the Wang-Yau quasi-local energy.

Future work will extend this study in various directions. Further extension of quasi-local definition for mass and angular momentum to surfaces of time-like mean curvature
vector is of great interest. An example is trapped surfaces inside the horizon, whose quasi-local mass might reveal important information about black holes. An attempt definition based
on Brown-York mass was made in a previous study \cite{LSYork:2007PRD},
where it is found the quasi-local mass either experiences an infinite
slope or a cusp at the horizon while the former was preferred by those
authors. A naive extension of Wang-Yau $\QLM$ presented here reveals a
similar behavior. A careful study of this issue will be presented
elsewhere.

Furthermore, it will be important to calculate the QLM in cases with rotation or
angular momentum. In these cases, we can also calculate the
quasi-local angular momentum. Quasiloal mass and angular momentum for surfaces inside Kerr's ergoregion might exhibit interesting features related to the Penrose process. The time variation of the mass and
angular momentum should be related to appropriate fluxes.  Moreover,
the event horizon, whose determination requires the knowledge of the
whole spacetime evolution, may reveal different information from the
apparent horizon studied here.  Finally, an extension of the QLM and angular
momentum to cosmological spacetimes would be of great interest.  Here
it would be necessary to consider a reference configuration not in Minkowski spacetime, but in de~Sitter or anti-de~Sitter spacetime \cite{ChenWangYau:2016dSAdS}.


\bibliography{references}

\providecommand{\noopsort}[1]{}\providecommand{\singleletter}[1]{#1}%
\begin{thebibliography}{69}%
\makeatletter
\providecommand \@ifxundefined [1]{%
 \@ifx{#1\undefined}
}%
\providecommand \@ifnum [1]{%
 \ifnum #1\expandafter \@firstoftwo
 \else \expandafter \@secondoftwo
 \fi
}%
\providecommand \@ifx [1]{%
 \ifx #1\expandafter \@firstoftwo
 \else \expandafter \@secondoftwo
 \fi
}%
\providecommand \natexlab [1]{#1}%
\providecommand \enquote  [1]{``#1''}%
\providecommand \bibnamefont  [1]{#1}%
\providecommand \bibfnamefont [1]{#1}%
\providecommand \citenamefont [1]{#1}%
\providecommand \href@noop [0]{\@secondoftwo}%
\providecommand \href [0]{\begingroup \@sanitize@url \@href}%
\providecommand \@href[1]{\@@startlink{#1}\@@href}%
\providecommand \@@href[1]{\endgroup#1\@@endlink}%
\providecommand \@sanitize@url [0]{\catcode `\\12\catcode `\$12\catcode
  `\&12\catcode `\#12\catcode `\^12\catcode `\_12\catcode `\%12\relax}%
\providecommand \@@startlink[1]{}%
\providecommand \@@endlink[0]{}%
\providecommand \url  [0]{\begingroup\@sanitize@url \@url }%
\providecommand \@url [1]{\endgroup\@href {#1}{\urlprefix }}%
\providecommand \urlprefix  [0]{URL }%
\providecommand \Eprint [0]{\href }%
\providecommand \doibase [0]{https://doi.org/}%
\providecommand \selectlanguage [0]{\@gobble}%
\providecommand \bibinfo  [0]{\@secondoftwo}%
\providecommand \bibfield  [0]{\@secondoftwo}%
\providecommand \translation [1]{[#1]}%
\providecommand \BibitemOpen [0]{}%
\providecommand \bibitemStop [0]{}%
\providecommand \bibitemNoStop [0]{.\EOS\space}%
\providecommand \EOS [0]{\spacefactor3000\relax}%
\providecommand \BibitemShut  [1]{\csname bibitem#1\endcsname}%
\let\auto@bib@innerbib\@empty
\bibitem [{\citenamefont {Penrose}(1982{\natexlab{a}})}]{Penrose1982unsolved}%
  \BibitemOpen
  \bibfield  {author} {\bibinfo {author} {\bibfnamefont {R.}~\bibnamefont
  {Penrose}},\ }\bibinfo {title} {Some unsolved problems in classical general
  relativity},\ in\ \href {https://doi.org/doi:10.1515/9781400881918-034}
  {\emph {\bibinfo {booktitle} {Seminar on Differential Geometry. (AM-102),
  Volume 102}}},\ \bibinfo {editor} {edited by\ \bibinfo {editor}
  {\bibfnamefont {S.-T.}\ \bibnamefont {Yau}}}\ (\bibinfo  {publisher}
  {Princeton University Press},\ \bibinfo {address} {Princeton},\ \bibinfo
  {year} {1982})\ pp.\ \bibinfo {pages} {631--668}\BibitemShut {NoStop}%
\bibitem [{\citenamefont {Szabados}(2009)}]{Szabados:2009review}%
  \BibitemOpen
  \bibfield  {author} {\bibinfo {author} {\bibfnamefont {L.~B.}\ \bibnamefont
  {Szabados}},\ }\bibfield  {title} {\bibinfo {title} {Quasi-local
  energy-momentum and angular momentum in general relativity},\ }\href@noop {}
  {\bibfield  {journal} {\bibinfo  {journal} {Living reviews in relativity}\
  }\textbf {\bibinfo {volume} {12}},\ \bibinfo {pages} {1} (\bibinfo {year}
  {2009})}\BibitemShut {NoStop}%
\bibitem [{\citenamefont {Arnowitt}\ \emph {et~al.}(1959)\citenamefont
  {Arnowitt}, \citenamefont {Deser},\ and\ \citenamefont
  {Misner}}]{Arnowitt:1959ah}%
  \BibitemOpen
  \bibfield  {author} {\bibinfo {author} {\bibfnamefont {R.~L.}\ \bibnamefont
  {Arnowitt}}, \bibinfo {author} {\bibfnamefont {S.}~\bibnamefont {Deser}},\
  and\ \bibinfo {author} {\bibfnamefont {C.~W.}\ \bibnamefont {Misner}},\
  }\bibfield  {title} {\bibinfo {title} {{Dynamical Structure and Definition of
  Energy in General Relativity}},\ }\href
  {https://doi.org/10.1103/PhysRev.116.1322} {\bibfield  {journal} {\bibinfo
  {journal} {Phys. Rev.}\ }\textbf {\bibinfo {volume} {116}},\ \bibinfo {pages}
  {1322} (\bibinfo {year} {1959})}\BibitemShut {NoStop}%
\bibitem [{\citenamefont {Schoen}\ and\ \citenamefont
  {Yau}(1979)}]{Schoen:1979zz}%
  \BibitemOpen
  \bibfield  {author} {\bibinfo {author} {\bibfnamefont {R.}~\bibnamefont
  {Schoen}}\ and\ \bibinfo {author} {\bibfnamefont {S.-T.}\ \bibnamefont
  {Yau}},\ }\bibfield  {title} {\bibinfo {title} {{Positivity of the Total Mass
  of a General Space-Time}},\ }\href
  {https://doi.org/10.1103/PhysRevLett.43.1457} {\bibfield  {journal} {\bibinfo
   {journal} {Phys. Rev. Lett.}\ }\textbf {\bibinfo {volume} {43}},\ \bibinfo
  {pages} {1457} (\bibinfo {year} {1979})}\BibitemShut {NoStop}%
\bibitem [{\citenamefont {Witten}(1981)}]{Witten:1981mf}%
  \BibitemOpen
  \bibfield  {author} {\bibinfo {author} {\bibfnamefont {E.}~\bibnamefont
  {Witten}},\ }\bibfield  {title} {\bibinfo {title} {{A Simple Proof of the
  Positive Energy Theorem}},\ }\href {https://doi.org/10.1007/BF01208277}
  {\bibfield  {journal} {\bibinfo  {journal} {Commun. Math. Phys.}\ }\textbf
  {\bibinfo {volume} {80}},\ \bibinfo {pages} {381} (\bibinfo {year}
  {1981})}\BibitemShut {NoStop}%
\bibitem [{\citenamefont {Bondi}\ \emph {et~al.}(1962)\citenamefont {Bondi},
  \citenamefont {van~der Burg},\ and\ \citenamefont {Metzner}}]{Bondi:1962px}%
  \BibitemOpen
  \bibfield  {author} {\bibinfo {author} {\bibfnamefont {H.}~\bibnamefont
  {Bondi}}, \bibinfo {author} {\bibfnamefont {M.~G.~J.}\ \bibnamefont {van~der
  Burg}},\ and\ \bibinfo {author} {\bibfnamefont {A.~W.~K.}\ \bibnamefont
  {Metzner}},\ }\bibfield  {title} {\bibinfo {title} {{Gravitational waves in
  general relativity. 7. Waves from axisymmetric isolated systems}},\ }\href
  {https://doi.org/10.1098/rspa.1962.0161} {\bibfield  {journal} {\bibinfo
  {journal} {Proc. Roy. Soc. Lond. A}\ }\textbf {\bibinfo {volume} {269}},\
  \bibinfo {pages} {21} (\bibinfo {year} {1962})}\BibitemShut {NoStop}%
\bibitem [{\citenamefont {Ashtekar}\ and\ \citenamefont
  {Krishnan}(2002)}]{Ashtekar:2002ag}%
  \BibitemOpen
  \bibfield  {author} {\bibinfo {author} {\bibfnamefont {A.}~\bibnamefont
  {Ashtekar}}\ and\ \bibinfo {author} {\bibfnamefont {B.}~\bibnamefont
  {Krishnan}},\ }\bibfield  {title} {\bibinfo {title} {{Dynamical horizons:
  Energy, angular momentum, fluxes and balance laws}},\ }\href
  {https://doi.org/10.1103/PhysRevLett.89.261101} {\bibfield  {journal}
  {\bibinfo  {journal} {Phys. Rev. Lett.}\ }\textbf {\bibinfo {volume} {89}},\
  \bibinfo {pages} {261101} (\bibinfo {year} {2002})},\ \Eprint
  {https://arxiv.org/abs/gr-qc/0207080} {arXiv:gr-qc/0207080} \BibitemShut
  {NoStop}%
\bibitem [{\citenamefont {Ashtekar}\ and\ \citenamefont
  {Krishnan}(2003)}]{Ashtekar:2003hk}%
  \BibitemOpen
  \bibfield  {author} {\bibinfo {author} {\bibfnamefont {A.}~\bibnamefont
  {Ashtekar}}\ and\ \bibinfo {author} {\bibfnamefont {B.}~\bibnamefont
  {Krishnan}},\ }\bibfield  {title} {\bibinfo {title} {{Dynamical horizons and
  their properties}},\ }\href {https://doi.org/10.1103/PhysRevD.68.104030}
  {\bibfield  {journal} {\bibinfo  {journal} {Phys. Rev. D}\ }\textbf {\bibinfo
  {volume} {68}},\ \bibinfo {pages} {104030} (\bibinfo {year} {2003})},\
  \Eprint {https://arxiv.org/abs/gr-qc/0308033} {arXiv:gr-qc/0308033}
  \BibitemShut {NoStop}%
\bibitem [{\citenamefont {Gourgoulhon}\ and\ \citenamefont
  {Jaramillo}(2006{\natexlab{a}})}]{Gourgoulhon:2006uc}%
  \BibitemOpen
  \bibfield  {author} {\bibinfo {author} {\bibfnamefont {E.}~\bibnamefont
  {Gourgoulhon}}\ and\ \bibinfo {author} {\bibfnamefont {J.~L.}\ \bibnamefont
  {Jaramillo}},\ }\bibfield  {title} {\bibinfo {title} {{Area evolution, bulk
  viscosity and entropy principles for dynamical horizons}},\ }\href
  {https://doi.org/10.1103/PhysRevD.74.087502} {\bibfield  {journal} {\bibinfo
  {journal} {Phys. Rev. D}\ }\textbf {\bibinfo {volume} {74}},\ \bibinfo
  {pages} {087502} (\bibinfo {year} {2006}{\natexlab{a}})},\ \Eprint
  {https://arxiv.org/abs/gr-qc/0607050} {arXiv:gr-qc/0607050} \BibitemShut
  {NoStop}%
\bibitem [{\citenamefont {Liu}\ and\ \citenamefont {Yau}(2006)}]{liuYau2006}%
  \BibitemOpen
  \bibfield  {author} {\bibinfo {author} {\bibfnamefont {C.-C.~M.}\
  \bibnamefont {Liu}}\ and\ \bibinfo {author} {\bibfnamefont {S.-T.}\
  \bibnamefont {Yau}},\ }\bibfield  {title} {\bibinfo {title} {Positivity of
  quasi-local mass {II}},\ }\href
  {https://doi.org/10.1090/S0894-0347-05-00497-2} {\bibfield  {journal}
  {\bibinfo  {journal} {Journal of the American Mathematical Society}\ }\textbf
  {\bibinfo {volume} {19}},\ \bibinfo {pages} {181} (\bibinfo {year}
  {2006})}\BibitemShut {NoStop}%
\bibitem [{\citenamefont {Wang}\ and\ \citenamefont
  {Yau}(2009{\natexlab{a}})}]{Wang:2008jy}%
  \BibitemOpen
  \bibfield  {author} {\bibinfo {author} {\bibfnamefont {M.-T.}\ \bibnamefont
  {Wang}}\ and\ \bibinfo {author} {\bibfnamefont {S.-T.}\ \bibnamefont {Yau}},\
  }\bibfield  {title} {\bibinfo {title} {{Quasilocal mass in general
  relativity}},\ }\href {https://doi.org/10.1103/PhysRevLett.102.021101}
  {\bibfield  {journal} {\bibinfo  {journal} {Phys. Rev. Lett.}\ }\textbf
  {\bibinfo {volume} {102}},\ \bibinfo {pages} {021101} (\bibinfo {year}
  {2009}{\natexlab{a}})},\ \Eprint {https://arxiv.org/abs/0804.1174}
  {arXiv:0804.1174 [gr-qc]} \BibitemShut {NoStop}%
\bibitem [{\citenamefont {Liu}\ and\ \citenamefont {Yau}(2003)}]{liuYau2003}%
  \BibitemOpen
  \bibfield  {author} {\bibinfo {author} {\bibfnamefont {C.-C.~M.}\
  \bibnamefont {Liu}}\ and\ \bibinfo {author} {\bibfnamefont {S.-T.}\
  \bibnamefont {Yau}},\ }\bibfield  {title} {\bibinfo {title} {Positivity of
  quasilocal mass},\ }\href {https://doi.org/10.1103/PhysRevLett.90.231102}
  {\bibfield  {journal} {\bibinfo  {journal} {Phys. Rev. Lett.}\ }\textbf
  {\bibinfo {volume} {90}},\ \bibinfo {pages} {231102} (\bibinfo {year}
  {2003})}\BibitemShut {NoStop}%
\bibitem [{\citenamefont {Bartnik}(1989)}]{Bartnik:1989zz}%
  \BibitemOpen
  \bibfield  {author} {\bibinfo {author} {\bibfnamefont {R.}~\bibnamefont
  {Bartnik}},\ }\bibfield  {title} {\bibinfo {title} {{New definition of
  quasilocal mass}},\ }\href {https://doi.org/10.1103/PhysRevLett.62.2346}
  {\bibfield  {journal} {\bibinfo  {journal} {Phys. Rev. Lett.}\ }\textbf
  {\bibinfo {volume} {62}},\ \bibinfo {pages} {2346} (\bibinfo {year}
  {1989})}\BibitemShut {NoStop}%
\bibitem [{\citenamefont {Hawking}(1968)}]{Hawking:1968qt}%
  \BibitemOpen
  \bibfield  {author} {\bibinfo {author} {\bibfnamefont {S.}~\bibnamefont
  {Hawking}},\ }\bibfield  {title} {\bibinfo {title} {{Gravitational radiation
  in an expanding universe}},\ }\href {https://doi.org/10.1063/1.1664615}
  {\bibfield  {journal} {\bibinfo  {journal} {J. Math. Phys.}\ }\textbf
  {\bibinfo {volume} {9}},\ \bibinfo {pages} {598} (\bibinfo {year}
  {1968})}\BibitemShut {NoStop}%
\bibitem [{\citenamefont {Penrose}(1982{\natexlab{b}})}]{Penrose:1982wp}%
  \BibitemOpen
  \bibfield  {author} {\bibinfo {author} {\bibfnamefont {R.}~\bibnamefont
  {Penrose}},\ }\bibfield  {title} {\bibinfo {title} {{Quasilocal mass and
  angular momentum in general relativity}},\ }\href
  {https://doi.org/10.1098/rspa.1982.0058} {\bibfield  {journal} {\bibinfo
  {journal} {Proc. Roy. Soc. Lond. A}\ }\textbf {\bibinfo {volume} {381}},\
  \bibinfo {pages} {53} (\bibinfo {year} {1982}{\natexlab{b}})}\BibitemShut
  {NoStop}%
\bibitem [{\citenamefont {Brown}\ and\ \citenamefont
  {York}(1993)}]{Brown:1992br}%
  \BibitemOpen
  \bibfield  {author} {\bibinfo {author} {\bibfnamefont {J.~D.}\ \bibnamefont
  {Brown}}\ and\ \bibinfo {author} {\bibfnamefont {J.~W.}\ \bibnamefont {York},
  \bibfnamefont {Jr.}},\ }\bibfield  {title} {\bibinfo {title} {{Quasilocal
  energy and conserved charges derived from the gravitational action}},\ }\href
  {https://doi.org/10.1103/PhysRevD.47.1407} {\bibfield  {journal} {\bibinfo
  {journal} {Phys. Rev. D}\ }\textbf {\bibinfo {volume} {47}},\ \bibinfo
  {pages} {1407} (\bibinfo {year} {1993})},\ \Eprint
  {https://arxiv.org/abs/gr-qc/9209012} {arXiv:gr-qc/9209012} \BibitemShut
  {NoStop}%
\bibitem [{\citenamefont {Lundgren}\ \emph {et~al.}(2007)\citenamefont
  {Lundgren}, \citenamefont {Schmekel},\ and\ \citenamefont
  {York~Jr}}]{LSYork:2007PRD}%
  \BibitemOpen
  \bibfield  {author} {\bibinfo {author} {\bibfnamefont {A.~P.}\ \bibnamefont
  {Lundgren}}, \bibinfo {author} {\bibfnamefont {B.~S.}\ \bibnamefont
  {Schmekel}},\ and\ \bibinfo {author} {\bibfnamefont {J.~W.}\ \bibnamefont
  {York~Jr}},\ }\bibfield  {title} {\bibinfo {title} {Self-renormalization of
  the classical quasilocal energy},\ }\href@noop {} {\bibfield  {journal}
  {\bibinfo  {journal} {Physical Review D}\ }\textbf {\bibinfo {volume} {75}},\
  \bibinfo {pages} {084026} (\bibinfo {year} {2007})}\BibitemShut {NoStop}%
\bibitem [{\citenamefont {{Kijowski}}(1997)}]{1997GReGr..29..307K}%
  \BibitemOpen
  \bibfield  {author} {\bibinfo {author} {\bibfnamefont {J.}~\bibnamefont
  {{Kijowski}}},\ }\bibfield  {title} {\bibinfo {title} {{A Simple Derivation
  of Canonical Structure and Quasi-local Hamiltonians in General Relativity}},\
  }\href {https://doi.org/10.1023/A:1010268818255} {\bibfield  {journal}
  {\bibinfo  {journal} {General Relativity and Gravitation}\ }\textbf {\bibinfo
  {volume} {29}},\ \bibinfo {pages} {307} (\bibinfo {year} {1997})}\BibitemShut
  {NoStop}%
\bibitem [{\citenamefont {Shi}\ and\ \citenamefont {Tam}(2002)}]{ShiTam2002}%
  \BibitemOpen
  \bibfield  {author} {\bibinfo {author} {\bibfnamefont {Y.}~\bibnamefont
  {Shi}}\ and\ \bibinfo {author} {\bibfnamefont {L.-F.}\ \bibnamefont {Tam}},\
  }\bibfield  {title} {\bibinfo {title} {{Positive Mass Theorem and the
  Boundary Behaviors of Compact Manifolds with Nonnegative Scalar Curvature}},\
  }\href {https://doi.org/10.4310/jdg/1090425530} {\bibfield  {journal}
  {\bibinfo  {journal} {Journal of Differential Geometry}\ }\textbf {\bibinfo
  {volume} {62}},\ \bibinfo {pages} {79} (\bibinfo {year} {2002})}\BibitemShut
  {NoStop}%
\bibitem [{\citenamefont {Wang}\ and\ \citenamefont
  {Yau}(2009{\natexlab{b}})}]{wang:2009cmp}%
  \BibitemOpen
  \bibfield  {author} {\bibinfo {author} {\bibfnamefont {M.-T.}\ \bibnamefont
  {Wang}}\ and\ \bibinfo {author} {\bibfnamefont {S.-T.}\ \bibnamefont {Yau}},\
  }\bibfield  {title} {\bibinfo {title} {Isometric embeddings into the
  minkowski space and new quasi-local mass},\ }\href@noop {} {\bibfield
  {journal} {\bibinfo  {journal} {Communications in Mathematical Physics}\
  }\textbf {\bibinfo {volume} {288}},\ \bibinfo {pages} {919} (\bibinfo {year}
  {2009}{\natexlab{b}})}\BibitemShut {NoStop}%
\bibitem [{\citenamefont {Wang}\ and\ \citenamefont
  {Yau}(2010)}]{WangYau:2010spatialinf}%
  \BibitemOpen
  \bibfield  {author} {\bibinfo {author} {\bibfnamefont {M.-T.}\ \bibnamefont
  {Wang}}\ and\ \bibinfo {author} {\bibfnamefont {S.-T.}\ \bibnamefont {Yau}},\
  }\bibfield  {title} {\bibinfo {title} {Limit of quasilocal mass at spatial
  infinity},\ }\href@noop {} {\bibfield  {journal} {\bibinfo  {journal}
  {Communications in Mathematical Physics}\ }\textbf {\bibinfo {volume}
  {296}},\ \bibinfo {pages} {271} (\bibinfo {year} {2010})}\BibitemShut
  {NoStop}%
\bibitem [{\citenamefont {Chen}\ \emph
  {et~al.}(2011{\natexlab{a}})\citenamefont {Chen}, \citenamefont {Wang},\ and\
  \citenamefont {Yau}}]{ChenWangYau:2011nullinf}%
  \BibitemOpen
  \bibfield  {author} {\bibinfo {author} {\bibfnamefont {P.}~\bibnamefont
  {Chen}}, \bibinfo {author} {\bibfnamefont {M.-T.}\ \bibnamefont {Wang}},\
  and\ \bibinfo {author} {\bibfnamefont {S.-T.}\ \bibnamefont {Yau}},\
  }\bibfield  {title} {\bibinfo {title} {Evaluating quasilocal energy and
  solving optimal embedding equation at null infinity},\ }\href@noop {}
  {\bibfield  {journal} {\bibinfo  {journal} {Communications in mathematical
  physics}\ }\textbf {\bibinfo {volume} {308}},\ \bibinfo {pages} {845}
  (\bibinfo {year} {2011}{\natexlab{a}})}\BibitemShut {NoStop}%
\bibitem [{\citenamefont {Chen}\ \emph {et~al.}(2015)\citenamefont {Chen},
  \citenamefont {Wang},\ and\ \citenamefont {Yau}}]{ChenWangYau:2015cmp}%
  \BibitemOpen
  \bibfield  {author} {\bibinfo {author} {\bibfnamefont {P.-N.}\ \bibnamefont
  {Chen}}, \bibinfo {author} {\bibfnamefont {M.-T.}\ \bibnamefont {Wang}},\
  and\ \bibinfo {author} {\bibfnamefont {S.-T.}\ \bibnamefont {Yau}},\
  }\bibfield  {title} {\bibinfo {title} {Conserved quantities in general
  relativity: from the quasi-local level to spatial infinity},\ }\href@noop {}
  {\bibfield  {journal} {\bibinfo  {journal} {Communications in Mathematical
  Physics}\ }\textbf {\bibinfo {volume} {338}},\ \bibinfo {pages} {31}
  (\bibinfo {year} {2015})}\BibitemShut {NoStop}%
\bibitem [{\citenamefont {Chen}\ \emph
  {et~al.}(2021{\natexlab{a}})\citenamefont {Chen}, \citenamefont {Keller},
  \citenamefont {Wang}, \citenamefont {Wang},\ and\ \citenamefont
  {Yau}}]{chen2021evolution}%
  \BibitemOpen
  \bibfield  {author} {\bibinfo {author} {\bibfnamefont {P.-N.}\ \bibnamefont
  {Chen}}, \bibinfo {author} {\bibfnamefont {J.}~\bibnamefont {Keller}},
  \bibinfo {author} {\bibfnamefont {M.-T.}\ \bibnamefont {Wang}}, \bibinfo
  {author} {\bibfnamefont {Y.-K.}\ \bibnamefont {Wang}},\ and\ \bibinfo
  {author} {\bibfnamefont {S.-T.}\ \bibnamefont {Yau}},\ }\bibfield  {title}
  {\bibinfo {title} {Evolution of angular momentum and center of mass at null
  infinity},\ }\href@noop {} {\bibfield  {journal} {\bibinfo  {journal}
  {Communications in Mathematical Physics}\ }\textbf {\bibinfo {volume}
  {386}},\ \bibinfo {pages} {551} (\bibinfo {year}
  {2021}{\natexlab{a}})}\BibitemShut {NoStop}%
\bibitem [{\citenamefont {Chen}\ \emph
  {et~al.}(2021{\natexlab{b}})\citenamefont {Chen}, \citenamefont {Wang},
  \citenamefont {Wang},\ and\ \citenamefont {Yau}}]{chen2021supertranslation}%
  \BibitemOpen
  \bibfield  {author} {\bibinfo {author} {\bibfnamefont {P.-N.}\ \bibnamefont
  {Chen}}, \bibinfo {author} {\bibfnamefont {M.-T.}\ \bibnamefont {Wang}},
  \bibinfo {author} {\bibfnamefont {Y.-K.}\ \bibnamefont {Wang}},\ and\
  \bibinfo {author} {\bibfnamefont {S.-T.}\ \bibnamefont {Yau}},\ }\bibfield
  {title} {\bibinfo {title} {Supertranslation invariance of angular momentum},\
  }\href@noop {} {\bibfield  {journal} {\bibinfo  {journal} {arXiv preprint
  arXiv:2102.03235}\ } (\bibinfo {year} {2021}{\natexlab{b}})}\BibitemShut
  {NoStop}%
\bibitem [{\citenamefont {Chen}\ \emph {et~al.}(2022)\citenamefont {Chen},
  \citenamefont {Wang}, \citenamefont {Wang},\ and\ \citenamefont
  {Yau}}]{chen2022supertranslation}%
  \BibitemOpen
  \bibfield  {author} {\bibinfo {author} {\bibfnamefont {P.-N.}\ \bibnamefont
  {Chen}}, \bibinfo {author} {\bibfnamefont {M.-T.}\ \bibnamefont {Wang}},
  \bibinfo {author} {\bibfnamefont {Y.-K.}\ \bibnamefont {Wang}},\ and\
  \bibinfo {author} {\bibfnamefont {S.-T.}\ \bibnamefont {Yau}},\ }\bibfield
  {title} {\bibinfo {title} {Supertranslation invariance of angular momentum at
  null infinity in double null gauge},\ }\href@noop {} {\bibfield  {journal}
  {\bibinfo  {journal} {arXiv preprint arXiv:2204.03182}\ } (\bibinfo {year}
  {2022})}\BibitemShut {NoStop}%
\bibitem [{\citenamefont {Chen}\ \emph
  {et~al.}(2011{\natexlab{b}})\citenamefont {Chen}, \citenamefont {Wang},\ and\
  \citenamefont {Yau}}]{Chen:2010tz}%
  \BibitemOpen
  \bibfield  {author} {\bibinfo {author} {\bibfnamefont {P.-N.}\ \bibnamefont
  {Chen}}, \bibinfo {author} {\bibfnamefont {M.-T.}\ \bibnamefont {Wang}},\
  and\ \bibinfo {author} {\bibfnamefont {S.-T.}\ \bibnamefont {Yau}},\
  }\bibfield  {title} {\bibinfo {title} {{Evaluating quasilocal energy and
  solving optimal embedding equation at null infinity}},\ }\href
  {https://doi.org/10.1007/s00220-011-1362-2} {\bibfield  {journal} {\bibinfo
  {journal} {Commun. Math. Phys.}\ }\textbf {\bibinfo {volume} {308}},\
  \bibinfo {pages} {845} (\bibinfo {year} {2011}{\natexlab{b}})},\ \Eprint
  {https://arxiv.org/abs/1002.0927} {arXiv:1002.0927 [math.DG]} \BibitemShut
  {NoStop}%
\bibitem [{\citenamefont {Chen}\ \emph {et~al.}(2014)\citenamefont {Chen},
  \citenamefont {Wang},\ and\ \citenamefont {Yau}}]{ChenWangYau:2014cmp}%
  \BibitemOpen
  \bibfield  {author} {\bibinfo {author} {\bibfnamefont {P.-N.}\ \bibnamefont
  {Chen}}, \bibinfo {author} {\bibfnamefont {M.-T.}\ \bibnamefont {Wang}},\
  and\ \bibinfo {author} {\bibfnamefont {S.-T.}\ \bibnamefont {Yau}},\
  }\bibfield  {title} {\bibinfo {title} {Minimizing properties of critical
  points of quasi-local energy},\ }\href@noop {} {\bibfield  {journal}
  {\bibinfo  {journal} {Communications in Mathematical Physics}\ }\textbf
  {\bibinfo {volume} {329}},\ \bibinfo {pages} {919} (\bibinfo {year}
  {2014})}\BibitemShut {NoStop}%
\bibitem [{\citenamefont {Ashtekar}\ and\ \citenamefont
  {Krishnan}(2004)}]{Ashtekar:2004cn}%
  \BibitemOpen
  \bibfield  {author} {\bibinfo {author} {\bibfnamefont {A.}~\bibnamefont
  {Ashtekar}}\ and\ \bibinfo {author} {\bibfnamefont {B.}~\bibnamefont
  {Krishnan}},\ }\bibfield  {title} {\bibinfo {title} {{Isolated and dynamical
  horizons and their applications}},\ }\href
  {https://doi.org/10.12942/lrr-2004-10} {\bibfield  {journal} {\bibinfo
  {journal} {Living Rev. Rel.}\ }\textbf {\bibinfo {volume} {7}},\ \bibinfo
  {pages} {10} (\bibinfo {year} {2004})},\ \Eprint
  {https://arxiv.org/abs/gr-qc/0407042} {arXiv:gr-qc/0407042} \BibitemShut
  {NoStop}%
\bibitem [{\citenamefont {Booth}(2005)}]{Booth:2005qc}%
  \BibitemOpen
  \bibfield  {author} {\bibinfo {author} {\bibfnamefont {I.}~\bibnamefont
  {Booth}},\ }\bibfield  {title} {\bibinfo {title} {{Black hole boundaries}},\
  }\href {https://doi.org/10.1139/p05-063} {\bibfield  {journal} {\bibinfo
  {journal} {Can. J. Phys.}\ }\textbf {\bibinfo {volume} {83}},\ \bibinfo
  {pages} {1073} (\bibinfo {year} {2005})},\ \Eprint
  {https://arxiv.org/abs/gr-qc/0508107} {arXiv:gr-qc/0508107} \BibitemShut
  {NoStop}%
\bibitem [{\citenamefont {Gourgoulhon}\ and\ \citenamefont
  {Jaramillo}(2006{\natexlab{b}})}]{Gourgoulhon:2005ng}%
  \BibitemOpen
  \bibfield  {author} {\bibinfo {author} {\bibfnamefont {E.}~\bibnamefont
  {Gourgoulhon}}\ and\ \bibinfo {author} {\bibfnamefont {J.~L.}\ \bibnamefont
  {Jaramillo}},\ }\bibfield  {title} {\bibinfo {title} {{A 3+1 perspective on
  null hypersurfaces and isolated horizons}},\ }\href
  {https://doi.org/10.1016/j.physrep.2005.10.005} {\bibfield  {journal}
  {\bibinfo  {journal} {Phys. Rept.}\ }\textbf {\bibinfo {volume} {423}},\
  \bibinfo {pages} {159} (\bibinfo {year} {2006}{\natexlab{b}})},\ \Eprint
  {https://arxiv.org/abs/gr-qc/0503113} {arXiv:gr-qc/0503113} \BibitemShut
  {NoStop}%
\bibitem [{\citenamefont {Hayward}(2004)}]{Hayward:2004fz}%
  \BibitemOpen
  \bibfield  {author} {\bibinfo {author} {\bibfnamefont {S.~A.}\ \bibnamefont
  {Hayward}},\ }\bibfield  {title} {\bibinfo {title} {{Energy and entropy
  conservation for dynamical black holes}},\ }\href
  {https://doi.org/10.1103/PhysRevD.70.104027} {\bibfield  {journal} {\bibinfo
  {journal} {Phys. Rev.}\ }\textbf {\bibinfo {volume} {D70}},\ \bibinfo {pages}
  {104027} (\bibinfo {year} {2004})},\ \Eprint
  {https://arxiv.org/abs/gr-qc/0408008} {arXiv:gr-qc/0408008} \BibitemShut
  {NoStop}%
\bibitem [{\citenamefont {Jaramillo}(2011)}]{Jaramillo:2011zw}%
  \BibitemOpen
  \bibfield  {author} {\bibinfo {author} {\bibfnamefont {J.~L.}\ \bibnamefont
  {Jaramillo}},\ }\bibfield  {title} {\bibinfo {title} {{An introduction to
  local Black Hole horizons in the 3+1 approach to General Relativity}},\
  }\href {https://doi.org/10.1142/S0218271811020366} {\bibfield  {journal}
  {\bibinfo  {journal} {Int. J. Mod. Phys.}\ }\textbf {\bibinfo {volume}
  {D20}},\ \bibinfo {pages} {2169} (\bibinfo {year} {2011})},\ \Eprint
  {https://arxiv.org/abs/1108.2408} {arXiv:1108.2408 [gr-qc]} \BibitemShut
  {NoStop}%
\bibitem [{\citenamefont {Andersson}\ \emph {et~al.}(2005)\citenamefont
  {Andersson}, \citenamefont {Mars},\ and\ \citenamefont
  {Simon}}]{Andersson:2005gq}%
  \BibitemOpen
  \bibfield  {author} {\bibinfo {author} {\bibfnamefont {L.}~\bibnamefont
  {Andersson}}, \bibinfo {author} {\bibfnamefont {M.}~\bibnamefont {Mars}},\
  and\ \bibinfo {author} {\bibfnamefont {W.}~\bibnamefont {Simon}},\ }\bibfield
   {title} {\bibinfo {title} {{Local existence of dynamical and trapping
  horizons}},\ }\href {https://doi.org/10.1103/PhysRevLett.95.111102}
  {\bibfield  {journal} {\bibinfo  {journal} {Phys.Rev.Lett.}\ }\textbf
  {\bibinfo {volume} {95}},\ \bibinfo {pages} {111102} (\bibinfo {year}
  {2005})},\ \Eprint {https://arxiv.org/abs/gr-qc/0506013} {arXiv:gr-qc/0506013
  [gr-qc]} \BibitemShut {NoStop}%
\bibitem [{\citenamefont {Andersson}\ \emph {et~al.}(2008)\citenamefont
  {Andersson}, \citenamefont {Mars},\ and\ \citenamefont
  {Simon}}]{Andersson:2007fh}%
  \BibitemOpen
  \bibfield  {author} {\bibinfo {author} {\bibfnamefont {L.}~\bibnamefont
  {Andersson}}, \bibinfo {author} {\bibfnamefont {M.}~\bibnamefont {Mars}},\
  and\ \bibinfo {author} {\bibfnamefont {W.}~\bibnamefont {Simon}},\ }\bibfield
   {title} {\bibinfo {title} {{Stability of marginally outer trapped surfaces
  and existence of marginally outer trapped tubes}},\ }\href@noop {} {\bibfield
   {journal} {\bibinfo  {journal} {Adv.Theor.Math.Phys.}\ }\textbf {\bibinfo
  {volume} {12}} (\bibinfo {year} {2008})},\ \Eprint
  {https://arxiv.org/abs/0704.2889} {arXiv:0704.2889 [gr-qc]} \BibitemShut
  {NoStop}%
\bibitem [{\citenamefont {Andersson}\ and\ \citenamefont
  {Metzger}(2009)}]{andersson2009area}%
  \BibitemOpen
  \bibfield  {author} {\bibinfo {author} {\bibfnamefont {L.}~\bibnamefont
  {Andersson}}\ and\ \bibinfo {author} {\bibfnamefont {J.}~\bibnamefont
  {Metzger}},\ }\bibfield  {title} {\bibinfo {title} {The area of horizons and
  the trapped region},\ }\href@noop {} {\bibfield  {journal} {\bibinfo
  {journal} {Communications in Mathematical Physics}\ }\textbf {\bibinfo
  {volume} {290}},\ \bibinfo {pages} {941} (\bibinfo {year}
  {2009})}\BibitemShut {NoStop}%
\bibitem [{\citenamefont {Pook-Kolb}\ \emph
  {et~al.}(2019{\natexlab{a}})\citenamefont {Pook-Kolb}, \citenamefont
  {Birnholtz}, \citenamefont {Krishnan},\ and\ \citenamefont
  {Schnetter}}]{Pook-Kolb:2019iao}%
  \BibitemOpen
  \bibfield  {author} {\bibinfo {author} {\bibfnamefont {D.}~\bibnamefont
  {Pook-Kolb}}, \bibinfo {author} {\bibfnamefont {O.}~\bibnamefont
  {Birnholtz}}, \bibinfo {author} {\bibfnamefont {B.}~\bibnamefont
  {Krishnan}},\ and\ \bibinfo {author} {\bibfnamefont {E.}~\bibnamefont
  {Schnetter}},\ }\bibfield  {title} {\bibinfo {title} {{Interior of a Binary
  Black Hole Merger}},\ }\href {https://doi.org/10.1103/PhysRevLett.123.171102}
  {\bibfield  {journal} {\bibinfo  {journal} {Phys. Rev. Lett.}\ }\textbf
  {\bibinfo {volume} {123}},\ \bibinfo {pages} {171102} (\bibinfo {year}
  {2019}{\natexlab{a}})},\ \Eprint {https://arxiv.org/abs/1903.05626}
  {arXiv:1903.05626 [gr-qc]} \BibitemShut {NoStop}%
\bibitem [{\citenamefont {Pook-Kolb}\ \emph
  {et~al.}(2019{\natexlab{b}})\citenamefont {Pook-Kolb}, \citenamefont
  {Birnholtz}, \citenamefont {Krishnan},\ and\ \citenamefont
  {Schnetter}}]{Pook-Kolb:2019ssg}%
  \BibitemOpen
  \bibfield  {author} {\bibinfo {author} {\bibfnamefont {D.}~\bibnamefont
  {Pook-Kolb}}, \bibinfo {author} {\bibfnamefont {O.}~\bibnamefont
  {Birnholtz}}, \bibinfo {author} {\bibfnamefont {B.}~\bibnamefont
  {Krishnan}},\ and\ \bibinfo {author} {\bibfnamefont {E.}~\bibnamefont
  {Schnetter}},\ }\bibfield  {title} {\bibinfo {title} {{Self-intersecting
  marginally outer trapped surfaces}},\ }\href
  {https://doi.org/10.1103/PhysRevD.100.084044} {\bibfield  {journal} {\bibinfo
   {journal} {Phys. Rev. D}\ }\textbf {\bibinfo {volume} {100}},\ \bibinfo
  {pages} {084044} (\bibinfo {year} {2019}{\natexlab{b}})},\ \Eprint
  {https://arxiv.org/abs/1907.00683} {arXiv:1907.00683 [gr-qc]} \BibitemShut
  {NoStop}%
\bibitem [{\citenamefont {Pook-Kolb}\ \emph {et~al.}(2020)\citenamefont
  {Pook-Kolb}, \citenamefont {Birnholtz}, \citenamefont {Jaramillo},
  \citenamefont {Krishnan},\ and\ \citenamefont
  {Schnetter}}]{Pook-Kolb:2020zhm}%
  \BibitemOpen
  \bibfield  {author} {\bibinfo {author} {\bibfnamefont {D.}~\bibnamefont
  {Pook-Kolb}}, \bibinfo {author} {\bibfnamefont {O.}~\bibnamefont
  {Birnholtz}}, \bibinfo {author} {\bibfnamefont {J.~L.}\ \bibnamefont
  {Jaramillo}}, \bibinfo {author} {\bibfnamefont {B.}~\bibnamefont
  {Krishnan}},\ and\ \bibinfo {author} {\bibfnamefont {E.}~\bibnamefont
  {Schnetter}},\ }\bibfield  {title} {\bibinfo {title} {{Horizons in a binary
  black hole merger I: Geometry and area increase}},\ }\Eprint
  {https://arxiv.org/abs/2006.03939} {arXiv:2006.03939 [gr-qc]}  (\bibinfo
  {year} {2020})\BibitemShut {NoStop}%
\bibitem [{\citenamefont {Pook-Kolb}\ \emph
  {et~al.}(2021{\natexlab{a}})\citenamefont {Pook-Kolb}, \citenamefont
  {Hennigar},\ and\ \citenamefont {Booth}}]{Pook-Kolb:2021gsh}%
  \BibitemOpen
  \bibfield  {author} {\bibinfo {author} {\bibfnamefont {D.}~\bibnamefont
  {Pook-Kolb}}, \bibinfo {author} {\bibfnamefont {R.~A.}\ \bibnamefont
  {Hennigar}},\ and\ \bibinfo {author} {\bibfnamefont {I.}~\bibnamefont
  {Booth}},\ }\bibfield  {title} {\bibinfo {title} {{What Happens to Apparent
  Horizons in a Binary Black Hole Merger?}},\ }\href
  {https://doi.org/10.1103/PhysRevLett.127.181101} {\bibfield  {journal}
  {\bibinfo  {journal} {Phys. Rev. Lett.}\ }\textbf {\bibinfo {volume} {127}},\
  \bibinfo {pages} {181101} (\bibinfo {year} {2021}{\natexlab{a}})},\ \Eprint
  {https://arxiv.org/abs/2104.10265} {arXiv:2104.10265 [gr-qc]} \BibitemShut
  {NoStop}%
\bibitem [{\citenamefont {Pook-Kolb}\ \emph
  {et~al.}(2021{\natexlab{b}})\citenamefont {Pook-Kolb}, \citenamefont
  {Booth},\ and\ \citenamefont {Hennigar}}]{Pook-Kolb:2021jpd}%
  \BibitemOpen
  \bibfield  {author} {\bibinfo {author} {\bibfnamefont {D.}~\bibnamefont
  {Pook-Kolb}}, \bibinfo {author} {\bibfnamefont {I.}~\bibnamefont {Booth}},\
  and\ \bibinfo {author} {\bibfnamefont {R.~A.}\ \bibnamefont {Hennigar}},\
  }\bibfield  {title} {\bibinfo {title} {{Ultimate fate of apparent horizons
  during a binary black hole merger. II. The vanishing of apparent horizons}},\
  }\href {https://doi.org/10.1103/PhysRevD.104.084084} {\bibfield  {journal}
  {\bibinfo  {journal} {Phys. Rev. D}\ }\textbf {\bibinfo {volume} {104}},\
  \bibinfo {pages} {084084} (\bibinfo {year} {2021}{\natexlab{b}})},\ \Eprint
  {https://arxiv.org/abs/2104.11344} {arXiv:2104.11344 [gr-qc]} \BibitemShut
  {NoStop}%
\bibitem [{\citenamefont {Booth}\ \emph {et~al.}(2021)\citenamefont {Booth},
  \citenamefont {Hennigar},\ and\ \citenamefont {Pook-Kolb}}]{Booth:2021sow}%
  \BibitemOpen
  \bibfield  {author} {\bibinfo {author} {\bibfnamefont {I.}~\bibnamefont
  {Booth}}, \bibinfo {author} {\bibfnamefont {R.~A.}\ \bibnamefont
  {Hennigar}},\ and\ \bibinfo {author} {\bibfnamefont {D.}~\bibnamefont
  {Pook-Kolb}},\ }\bibfield  {title} {\bibinfo {title} {{Ultimate fate of
  apparent horizons during a binary black hole merger. I. Locating and
  understanding axisymmetric marginally outer trapped surfaces}},\ }\href
  {https://doi.org/10.1103/PhysRevD.104.084083} {\bibfield  {journal} {\bibinfo
   {journal} {Phys. Rev. D}\ }\textbf {\bibinfo {volume} {104}},\ \bibinfo
  {pages} {084083} (\bibinfo {year} {2021})},\ \Eprint
  {https://arxiv.org/abs/2104.11343} {arXiv:2104.11343 [gr-qc]} \BibitemShut
  {NoStop}%
\bibitem [{\citenamefont {Brill}\ and\ \citenamefont
  {Lindquist}(1963)}]{BrillLindquist1963:PR}%
  \BibitemOpen
  \bibfield  {author} {\bibinfo {author} {\bibfnamefont {D.~R.}\ \bibnamefont
  {Brill}}\ and\ \bibinfo {author} {\bibfnamefont {R.~W.}\ \bibnamefont
  {Lindquist}},\ }\bibfield  {title} {\bibinfo {title} {Interaction energy in
  geometrostatics},\ }\href {https://doi.org/10.1103/PhysRev.131.471}
  {\bibfield  {journal} {\bibinfo  {journal} {Physical Review}\ }\textbf
  {\bibinfo {volume} {131}},\ \bibinfo {pages} {471} (\bibinfo {year}
  {1963})}\BibitemShut {NoStop}%
\bibitem [{\citenamefont {Pook-Kolb}\ \emph
  {et~al.}(2019{\natexlab{c}})\citenamefont {Pook-Kolb}, \citenamefont
  {Birnholtz}, \citenamefont {Krishnan},\ and\ \citenamefont
  {Schnetter}}]{Pook-Kolb:2018igu}%
  \BibitemOpen
  \bibfield  {author} {\bibinfo {author} {\bibfnamefont {D.}~\bibnamefont
  {Pook-Kolb}}, \bibinfo {author} {\bibfnamefont {O.}~\bibnamefont
  {Birnholtz}}, \bibinfo {author} {\bibfnamefont {B.}~\bibnamefont
  {Krishnan}},\ and\ \bibinfo {author} {\bibfnamefont {E.}~\bibnamefont
  {Schnetter}},\ }\bibfield  {title} {\bibinfo {title} {{Existence and
  stability of marginally trapped surfaces in black-hole spacetimes}},\ }\href
  {https://doi.org/10.1103/PhysRevD.99.064005} {\bibfield  {journal} {\bibinfo
  {journal} {Phys. Rev. D}\ }\textbf {\bibinfo {volume} {99}},\ \bibinfo
  {pages} {064005} (\bibinfo {year} {2019}{\natexlab{c}})},\ \Eprint
  {https://arxiv.org/abs/1811.10405} {arXiv:1811.10405 [gr-qc]} \BibitemShut
  {NoStop}%
\bibitem [{\citenamefont {Ansorg}\ \emph {et~al.}(2004)\citenamefont {Ansorg},
  \citenamefont {Br{\"u}gmann},\ and\ \citenamefont {Tichy}}]{Ansorg:2004ds}%
  \BibitemOpen
  \bibfield  {author} {\bibinfo {author} {\bibfnamefont {M.}~\bibnamefont
  {Ansorg}}, \bibinfo {author} {\bibfnamefont {B.}~\bibnamefont
  {Br{\"u}gmann}},\ and\ \bibinfo {author} {\bibfnamefont {W.}~\bibnamefont
  {Tichy}},\ }\bibfield  {title} {\bibinfo {title} {A single-domain spectral
  method for black hole puncture data},\ }\href
  {https://doi.org/10.1103/PhysRevD.70.064011} {\bibfield  {journal} {\bibinfo
  {journal} {Phys. Rev. D}\ }\textbf {\bibinfo {volume} {70}},\ \bibinfo
  {pages} {064011} (\bibinfo {year} {2004})},\ \Eprint
  {https://arxiv.org/abs/arXiv:gr-qc/0404056} {arXiv:gr-qc/0404056}
  \BibitemShut {NoStop}%
\bibitem [{\citenamefont {L{\"{o}}ffler}\ \emph {et~al.}(2012)\citenamefont
  {L{\"{o}}ffler}, \citenamefont {Faber}, \citenamefont {Bentivegna},
  \citenamefont {Bode}, \citenamefont {Diener}, \citenamefont {Haas},
  \citenamefont {Hinder}, \citenamefont {Mundim}, \citenamefont {Ott},
  \citenamefont {Schnetter}, \citenamefont {Allen}, \citenamefont
  {Campanelli},\ and\ \citenamefont {Laguna}}]{Loffler:2011ay}%
  \BibitemOpen
  \bibfield  {author} {\bibinfo {author} {\bibfnamefont {F.}~\bibnamefont
  {L{\"{o}}ffler}}, \bibinfo {author} {\bibfnamefont {J.}~\bibnamefont
  {Faber}}, \bibinfo {author} {\bibfnamefont {E.}~\bibnamefont {Bentivegna}},
  \bibinfo {author} {\bibfnamefont {T.}~\bibnamefont {Bode}}, \bibinfo {author}
  {\bibfnamefont {P.}~\bibnamefont {Diener}}, \bibinfo {author} {\bibfnamefont
  {R.}~\bibnamefont {Haas}}, \bibinfo {author} {\bibfnamefont {I.}~\bibnamefont
  {Hinder}}, \bibinfo {author} {\bibfnamefont {B.~C.}\ \bibnamefont {Mundim}},
  \bibinfo {author} {\bibfnamefont {C.~D.}\ \bibnamefont {Ott}}, \bibinfo
  {author} {\bibfnamefont {E.}~\bibnamefont {Schnetter}}, \bibinfo {author}
  {\bibfnamefont {G.}~\bibnamefont {Allen}}, \bibinfo {author} {\bibfnamefont
  {M.}~\bibnamefont {Campanelli}},\ and\ \bibinfo {author} {\bibfnamefont
  {P.}~\bibnamefont {Laguna}},\ }\bibfield  {title} {\bibinfo {title} {{{T}he
  {E}instein {T}oolkit: {A} {C}ommunity {C}omputational {I}nfrastructure for
  {R}elativistic {A}strophysics}},\ }\href
  {https://doi.org/doi:10.1088/0264-9381/29/11/115001} {\bibfield  {journal}
  {\bibinfo  {journal} {Class. Quantum Grav.}\ }\textbf {\bibinfo {volume}
  {29}},\ \bibinfo {pages} {115001} (\bibinfo {year} {2012})},\ \Eprint
  {https://arxiv.org/abs/arXiv:1111.3344 [gr-qc]} {arXiv:1111.3344 [gr-qc]}
  \BibitemShut {NoStop}%
\bibitem [{EinsteinToolkit()}]{EinsteinToolkit:web}%
  \BibitemOpen
  EinsteinToolkit,\ \href@noop {} {\bibinfo {title} {{Einstein Toolkit}: Open
  software for relativistic astrophysics}},\ \bibinfo {note}
  {\url{http://einsteintoolkit.org/}}\BibitemShut {NoStop}%
\bibitem [{\citenamefont {Brown}\ \emph {et~al.}(2009)\citenamefont {Brown},
  \citenamefont {Diener}, \citenamefont {Sarbach}, \citenamefont {Schnetter},\
  and\ \citenamefont {Tiglio}}]{Brown:2008sb}%
  \BibitemOpen
  \bibfield  {author} {\bibinfo {author} {\bibfnamefont {J.~D.}\ \bibnamefont
  {Brown}}, \bibinfo {author} {\bibfnamefont {P.}~\bibnamefont {Diener}},
  \bibinfo {author} {\bibfnamefont {O.}~\bibnamefont {Sarbach}}, \bibinfo
  {author} {\bibfnamefont {E.}~\bibnamefont {Schnetter}},\ and\ \bibinfo
  {author} {\bibfnamefont {M.}~\bibnamefont {Tiglio}},\ }\bibfield  {title}
  {\bibinfo {title} {{Turduckening black holes: an analytical and computational
  study}},\ }\href {https://doi.org/10.1103/PhysRevD.79.044023} {\bibfield
  {journal} {\bibinfo  {journal} {Phys. Rev. D}\ }\textbf {\bibinfo {volume}
  {79}},\ \bibinfo {pages} {044023} (\bibinfo {year} {2009})},\ \Eprint
  {https://arxiv.org/abs/arXiv:0809.3533 [gr-qc]} {arXiv:0809.3533 [gr-qc]}
  \BibitemShut {NoStop}%
\bibitem [{\citenamefont {Husa}\ \emph {et~al.}(2006)\citenamefont {Husa},
  \citenamefont {Hinder},\ and\ \citenamefont {Lechner}}]{Husa:2004ip}%
  \BibitemOpen
  \bibfield  {author} {\bibinfo {author} {\bibfnamefont {S.}~\bibnamefont
  {Husa}}, \bibinfo {author} {\bibfnamefont {I.}~\bibnamefont {Hinder}},\ and\
  \bibinfo {author} {\bibfnamefont {C.}~\bibnamefont {Lechner}},\ }\bibfield
  {title} {\bibinfo {title} {{Kranc: a Mathematica application to generate
  numerical codes for tensorial evolution equations}},\ }\href@noop {}
  {\bibfield  {journal} {\bibinfo  {journal} {Comput. Phys. Commun.}\ }\textbf
  {\bibinfo {volume} {174}},\ \bibinfo {pages} {983} (\bibinfo {year}
  {2006})},\ \Eprint {https://arxiv.org/abs/arXiv:gr-qc/0404023}
  {arXiv:gr-qc/0404023} \BibitemShut {NoStop}%
\bibitem [{Kranc()}]{Kranc:web}%
  \BibitemOpen
  Kranc,\ \href {http://kranccode.org/} {\bibinfo {title} {{Kranc}: {Kranc}
  assembles numerical code}}\BibitemShut {NoStop}%
\bibitem [{\citenamefont {Alcubierre}\ \emph {et~al.}(2000)\citenamefont
  {Alcubierre}, \citenamefont {Allen}, \citenamefont {Br{\"u}gmann},
  \citenamefont {Dramlitsch}, \citenamefont {Font}, \citenamefont
  {Papadopoulos}, \citenamefont {Seidel}, \citenamefont {Stergioulas},
  \citenamefont {Suen},\ and\ \citenamefont {Takahashi}}]{Alcubierre:2000xu}%
  \BibitemOpen
  \bibfield  {author} {\bibinfo {author} {\bibfnamefont {M.}~\bibnamefont
  {Alcubierre}}, \bibinfo {author} {\bibfnamefont {G.}~\bibnamefont {Allen}},
  \bibinfo {author} {\bibfnamefont {B.}~\bibnamefont {Br{\"u}gmann}}, \bibinfo
  {author} {\bibfnamefont {T.}~\bibnamefont {Dramlitsch}}, \bibinfo {author}
  {\bibfnamefont {J.~A.}\ \bibnamefont {Font}}, \bibinfo {author}
  {\bibfnamefont {P.}~\bibnamefont {Papadopoulos}}, \bibinfo {author}
  {\bibfnamefont {E.}~\bibnamefont {Seidel}}, \bibinfo {author} {\bibfnamefont
  {N.}~\bibnamefont {Stergioulas}}, \bibinfo {author} {\bibfnamefont {W.-M.}\
  \bibnamefont {Suen}},\ and\ \bibinfo {author} {\bibfnamefont
  {R.}~\bibnamefont {Takahashi}},\ }\bibfield  {title} {\bibinfo {title}
  {{Towards a stable numerical evolution of strongly gravitating systems in
  general relativity: The Conformal treatments}},\ }\href
  {https://doi.org/10.1103/PhysRevD.62.044034} {\bibfield  {journal} {\bibinfo
  {journal} {Phys. Rev.}\ }\textbf {\bibinfo {volume} {D62}},\ \bibinfo {pages}
  {044034} (\bibinfo {year} {2000})},\ \Eprint
  {https://arxiv.org/abs/gr-qc/0003071} {arXiv:gr-qc/0003071 [gr-qc]}
  \BibitemShut {NoStop}%
\bibitem [{\citenamefont {Alcubierre}\ \emph {et~al.}(2003)\citenamefont
  {Alcubierre}, \citenamefont {Br{\"u}gmann}, \citenamefont {Diener},
  \citenamefont {Koppitz}, \citenamefont {Pollney}, \citenamefont {Seidel},\
  and\ \citenamefont {Takahashi}}]{Alcubierre:2002kk}%
  \BibitemOpen
  \bibfield  {author} {\bibinfo {author} {\bibfnamefont {M.}~\bibnamefont
  {Alcubierre}}, \bibinfo {author} {\bibfnamefont {B.}~\bibnamefont
  {Br{\"u}gmann}}, \bibinfo {author} {\bibfnamefont {P.}~\bibnamefont
  {Diener}}, \bibinfo {author} {\bibfnamefont {M.}~\bibnamefont {Koppitz}},
  \bibinfo {author} {\bibfnamefont {D.}~\bibnamefont {Pollney}}, \bibinfo
  {author} {\bibfnamefont {E.}~\bibnamefont {Seidel}},\ and\ \bibinfo {author}
  {\bibfnamefont {R.}~\bibnamefont {Takahashi}},\ }\bibfield  {title} {\bibinfo
  {title} {{Gauge conditions for long term numerical black hole evolutions
  without excision}},\ }\href {https://doi.org/10.1103/PhysRevD.67.084023}
  {\bibfield  {journal} {\bibinfo  {journal} {Phys. Rev.}\ }\textbf {\bibinfo
  {volume} {D67}},\ \bibinfo {pages} {084023} (\bibinfo {year} {2003})},\
  \Eprint {https://arxiv.org/abs/gr-qc/0206072} {arXiv:gr-qc/0206072 [gr-qc]}
  \BibitemShut {NoStop}%
\bibitem [{\citenamefont {Pook-Kolb}\ \emph
  {et~al.}(2021{\natexlab{c}})\citenamefont {Pook-Kolb}, \citenamefont
  {Birnholtz}, \citenamefont {Booth}, \citenamefont {Hennigar}, \citenamefont
  {Jaramillo}, \citenamefont {Krishnan}, \citenamefont {Schnetter},\ and\
  \citenamefont {Zhang}}]{pook_kolb_daniel_2021_4687700}%
  \BibitemOpen
  \bibfield  {author} {\bibinfo {author} {\bibfnamefont {D.}~\bibnamefont
  {Pook-Kolb}}, \bibinfo {author} {\bibfnamefont {O.}~\bibnamefont
  {Birnholtz}}, \bibinfo {author} {\bibfnamefont {I.}~\bibnamefont {Booth}},
  \bibinfo {author} {\bibfnamefont {R.~A.}\ \bibnamefont {Hennigar}}, \bibinfo
  {author} {\bibfnamefont {J.~L.}\ \bibnamefont {Jaramillo}}, \bibinfo {author}
  {\bibfnamefont {B.}~\bibnamefont {Krishnan}}, \bibinfo {author}
  {\bibfnamefont {E.}~\bibnamefont {Schnetter}},\ and\ \bibinfo {author}
  {\bibfnamefont {V.}~\bibnamefont {Zhang}},\ }\href
  {https://doi.org/10.5281/zenodo.4687700} {\bibinfo {title} {Mots finder
  version 1.5}} (\bibinfo {year} {2021}{\natexlab{c}})\BibitemShut {NoStop}%
\bibitem [{\citenamefont {Boyd}(2001)}]{BoydBook}%
  \BibitemOpen
  \bibfield  {author} {\bibinfo {author} {\bibfnamefont {J.}~\bibnamefont
  {Boyd}},\ }\href@noop {} {\emph {\bibinfo {title} {{Chebyshev and Fourier
  Spectral Methods}}}}\ (\bibinfo  {publisher} {Dover Publications, New York},\
  \bibinfo {year} {2001})\BibitemShut {NoStop}%
\bibitem [{\citenamefont {Meurer}\ \emph {et~al.}(2017)\citenamefont {Meurer},
  \citenamefont {Smith}, \citenamefont {Paprocki}, \citenamefont
  {{\v{C}}ert{\'\i}k}, \citenamefont {Kirpichev}, \citenamefont {Rocklin},
  \citenamefont {Kumar}, \citenamefont {Ivanov}, \citenamefont {Moore},
  \citenamefont {Singh} \emph {et~al.}}]{meurer2017sympy}%
  \BibitemOpen
  \bibfield  {author} {\bibinfo {author} {\bibfnamefont {A.}~\bibnamefont
  {Meurer}}, \bibinfo {author} {\bibfnamefont {C.~P.}\ \bibnamefont {Smith}},
  \bibinfo {author} {\bibfnamefont {M.}~\bibnamefont {Paprocki}}, \bibinfo
  {author} {\bibfnamefont {O.}~\bibnamefont {{\v{C}}ert{\'\i}k}}, \bibinfo
  {author} {\bibfnamefont {S.~B.}\ \bibnamefont {Kirpichev}}, \bibinfo {author}
  {\bibfnamefont {M.}~\bibnamefont {Rocklin}}, \bibinfo {author} {\bibfnamefont
  {A.}~\bibnamefont {Kumar}}, \bibinfo {author} {\bibfnamefont
  {S.}~\bibnamefont {Ivanov}}, \bibinfo {author} {\bibfnamefont {J.~K.}\
  \bibnamefont {Moore}}, \bibinfo {author} {\bibfnamefont {S.}~\bibnamefont
  {Singh}}, \emph {et~al.},\ }\bibfield  {title} {\bibinfo {title} {{SymPy}:
  symbolic computing in {Python}},\ }\href
  {https://doi.org/10.7717/peerj-cs.103} {\bibfield  {journal} {\bibinfo
  {journal} {PeerJ Computer Science}\ }\textbf {\bibinfo {volume} {3}},\
  \bibinfo {pages} {e103} (\bibinfo {year} {2017})}\BibitemShut {NoStop}%
\bibitem [{\citenamefont {Martinez}(1994)}]{martinez:1994PRD}%
  \BibitemOpen
  \bibfield  {author} {\bibinfo {author} {\bibfnamefont {E.~A.}\ \bibnamefont
  {Martinez}},\ }\bibfield  {title} {\bibinfo {title} {Quasilocal energy for a
  kerr black hole},\ }\href@noop {} {\bibfield  {journal} {\bibinfo  {journal}
  {Physical Review D}\ }\textbf {\bibinfo {volume} {50}},\ \bibinfo {pages}
  {4920} (\bibinfo {year} {1994})}\BibitemShut {NoStop}%
\bibitem [{\citenamefont {Miao}\ \emph {et~al.}(2010)\citenamefont {Miao},
  \citenamefont {Shi},\ and\ \citenamefont {Tam}}]{MiaoShiTam:2010CMP}%
  \BibitemOpen
  \bibfield  {author} {\bibinfo {author} {\bibfnamefont {P.}~\bibnamefont
  {Miao}}, \bibinfo {author} {\bibfnamefont {Y.}~\bibnamefont {Shi}},\ and\
  \bibinfo {author} {\bibfnamefont {L.-F.}\ \bibnamefont {Tam}},\ }\bibfield
  {title} {\bibinfo {title} {On geometric problems related to brown-york and
  liu-yau quasilocal mass},\ }\href@noop {} {\bibfield  {journal} {\bibinfo
  {journal} {Communications in mathematical physics}\ }\textbf {\bibinfo
  {volume} {298}},\ \bibinfo {pages} {437} (\bibinfo {year}
  {2010})}\BibitemShut {NoStop}%
\bibitem [{\citenamefont {Huisken}\ and\ \citenamefont
  {Yau}(1996)}]{HuiskenYau:1996Inventions}%
  \BibitemOpen
  \bibfield  {author} {\bibinfo {author} {\bibfnamefont {G.}~\bibnamefont
  {Huisken}}\ and\ \bibinfo {author} {\bibfnamefont {S.-T.}\ \bibnamefont
  {Yau}},\ }\bibfield  {title} {\bibinfo {title} {Definition of center of mass
  for isolated physical systems and unique foliations by stable spheres with
  constant mean curvature},\ }\href@noop {} {\bibfield  {journal} {\bibinfo
  {journal} {Inventiones mathematicae}\ }\textbf {\bibinfo {volume} {124}},\
  \bibinfo {pages} {281} (\bibinfo {year} {1996})}\BibitemShut {NoStop}%
\bibitem [{\citenamefont {Qing}\ and\ \citenamefont
  {Tian}(2007)}]{QingTian:2007uniqueness}%
  \BibitemOpen
  \bibfield  {author} {\bibinfo {author} {\bibfnamefont {J.}~\bibnamefont
  {Qing}}\ and\ \bibinfo {author} {\bibfnamefont {G.}~\bibnamefont {Tian}},\
  }\bibfield  {title} {\bibinfo {title} {On the uniqueness of the foliation of
  spheres of constant mean curvature in asymptotically flat 3-manifolds},\
  }\href@noop {} {\bibfield  {journal} {\bibinfo  {journal} {Journal of the
  American Mathematical Society}\ }\textbf {\bibinfo {volume} {20}},\ \bibinfo
  {pages} {1091} (\bibinfo {year} {2007})}\BibitemShut {NoStop}%
\bibitem [{\citenamefont {Metzger}(2007)}]{metzger2007foliations}%
  \BibitemOpen
  \bibfield  {author} {\bibinfo {author} {\bibfnamefont {J.}~\bibnamefont
  {Metzger}},\ }\bibfield  {title} {\bibinfo {title} {Foliations of
  asymptotically flat 3-manifolds by 2-surfaces of prescribed mean curvature},\
  }\href@noop {} {\bibfield  {journal} {\bibinfo  {journal} {Journal of
  Differential Geometry}\ }\textbf {\bibinfo {volume} {77}},\ \bibinfo {pages}
  {201} (\bibinfo {year} {2007})}\BibitemShut {NoStop}%
\bibitem [{\citenamefont {Huang}(2010)}]{huang2010foliations}%
  \BibitemOpen
  \bibfield  {author} {\bibinfo {author} {\bibfnamefont {L.-H.}\ \bibnamefont
  {Huang}},\ }\bibfield  {title} {\bibinfo {title} {Foliations by stable
  spheres with constant mean curvature for isolated systems with general
  asymptotics},\ }\href@noop {} {\bibfield  {journal} {\bibinfo  {journal}
  {Communications in Mathematical Physics}\ }\textbf {\bibinfo {volume}
  {300}},\ \bibinfo {pages} {331} (\bibinfo {year} {2010})}\BibitemShut
  {NoStop}%
\bibitem [{\citenamefont {Nerz}(2015)}]{nerz2015foliations}%
  \BibitemOpen
  \bibfield  {author} {\bibinfo {author} {\bibfnamefont {C.}~\bibnamefont
  {Nerz}},\ }\bibfield  {title} {\bibinfo {title} {Foliations by stable spheres
  with constant mean curvature for isolated systems without asymptotic
  symmetry},\ }\href@noop {} {\bibfield  {journal} {\bibinfo  {journal}
  {Calculus of Variations and Partial Differential Equations}\ }\textbf
  {\bibinfo {volume} {54}},\ \bibinfo {pages} {1911} (\bibinfo {year}
  {2015})}\BibitemShut {NoStop}%
\bibitem [{\citenamefont {Eichmair}\ and\ \citenamefont
  {Koerber}(2022)}]{EichmairKoerber:2022CMCfoliations}%
  \BibitemOpen
  \bibfield  {author} {\bibinfo {author} {\bibfnamefont {M.}~\bibnamefont
  {Eichmair}}\ and\ \bibinfo {author} {\bibfnamefont {T.}~\bibnamefont
  {Koerber}},\ }\bibfield  {title} {\bibinfo {title} {Foliations of
  asymptotically flat 3-manifolds by stable constant mean curvature spheres},\
  }\href@noop {} {\bibfield  {journal} {\bibinfo  {journal} {arXiv preprint
  arXiv:2201.12081}\ } (\bibinfo {year} {2022})}\BibitemShut {NoStop}%
\bibitem [{\citenamefont {Anninos}\ \emph {et~al.}(1994)\citenamefont
  {Anninos}, \citenamefont {Bernstein}, \citenamefont {Brandt}, \citenamefont
  {Hobill}, \citenamefont {Seidel},\ and\ \citenamefont
  {Smarr}}]{anninos:1994PRD}%
  \BibitemOpen
  \bibfield  {author} {\bibinfo {author} {\bibfnamefont {P.}~\bibnamefont
  {Anninos}}, \bibinfo {author} {\bibfnamefont {D.}~\bibnamefont {Bernstein}},
  \bibinfo {author} {\bibfnamefont {S.~R.}\ \bibnamefont {Brandt}}, \bibinfo
  {author} {\bibfnamefont {D.}~\bibnamefont {Hobill}}, \bibinfo {author}
  {\bibfnamefont {E.}~\bibnamefont {Seidel}},\ and\ \bibinfo {author}
  {\bibfnamefont {L.}~\bibnamefont {Smarr}},\ }\bibfield  {title} {\bibinfo
  {title} {Dynamics of black hole apparent horizons},\ }\href@noop {}
  {\bibfield  {journal} {\bibinfo  {journal} {Physical Review D}\ }\textbf
  {\bibinfo {volume} {50}},\ \bibinfo {pages} {3801} (\bibinfo {year}
  {1994})}\BibitemShut {NoStop}%
\bibitem [{\citenamefont {Thorne}(1972)}]{Thorne-Hoop}%
  \BibitemOpen
  \bibfield  {author} {\bibinfo {author} {\bibfnamefont {K.}~\bibnamefont
  {Thorne}},\ }\bibinfo {title} {{MAGIC WITHOUT MAGIC - JOHN ARCHIBALD WHEELER.
  A COLLECTION OF ESSAYS IN HONOR OF HIS 60TH BIRTHDAY}}\ (\bibinfo
  {publisher} {Freeman},\ \bibinfo {address} {San Francisco},\ \bibinfo {year}
  {1972})\ p.\ \bibinfo {pages} {231}\BibitemShut {NoStop}%
\bibitem [{\citenamefont {Senovilla}(2008)}]{Senovilla:2007dw}%
  \BibitemOpen
  \bibfield  {author} {\bibinfo {author} {\bibfnamefont {J.~M.~M.}\
  \bibnamefont {Senovilla}},\ }\bibfield  {title} {\bibinfo {title} {{A
  Reformulation of the Hoop Conjecture}},\ }\href
  {https://doi.org/10.1209/0295-5075/81/20004} {\bibfield  {journal} {\bibinfo
  {journal} {EPL}\ }\textbf {\bibinfo {volume} {81}},\ \bibinfo {pages} {20004}
  (\bibinfo {year} {2008})},\ \Eprint {https://arxiv.org/abs/0709.0695}
  {arXiv:0709.0695 [gr-qc]} \BibitemShut {NoStop}%
\bibitem [{\citenamefont {Ames}\ \emph {et~al.}(2023)\citenamefont {Ames},
  \citenamefont {Andr\'easson},\ and\ \citenamefont {Rinne}}]{Ames:2023akp}%
  \BibitemOpen
  \bibfield  {author} {\bibinfo {author} {\bibfnamefont {E.}~\bibnamefont
  {Ames}}, \bibinfo {author} {\bibfnamefont {H.}~\bibnamefont {Andr\'easson}},\
  and\ \bibinfo {author} {\bibfnamefont {O.}~\bibnamefont {Rinne}},\ }\bibfield
   {title} {\bibinfo {title} {{On the Hoop conjecture and the weak cosmic
  censorship conjecture for the axisymmetric Einstein-Vlasov system}},\
  }\Eprint {https://arxiv.org/abs/2305.04360} {arXiv:2305.04360 [gr-qc]}
  (\bibinfo {year} {2023})\BibitemShut {NoStop}%
\bibitem [{\citenamefont {Mourier}\ \emph {et~al.}(2021)\citenamefont
  {Mourier}, \citenamefont {Jim\'enez~Forteza}, \citenamefont {Pook-Kolb},
  \citenamefont {Krishnan},\ and\ \citenamefont {Schnetter}}]{Mourier:2020mwa}%
  \BibitemOpen
  \bibfield  {author} {\bibinfo {author} {\bibfnamefont {P.}~\bibnamefont
  {Mourier}}, \bibinfo {author} {\bibfnamefont {X.}~\bibnamefont
  {Jim\'enez~Forteza}}, \bibinfo {author} {\bibfnamefont {D.}~\bibnamefont
  {Pook-Kolb}}, \bibinfo {author} {\bibfnamefont {B.}~\bibnamefont
  {Krishnan}},\ and\ \bibinfo {author} {\bibfnamefont {E.}~\bibnamefont
  {Schnetter}},\ }\bibfield  {title} {\bibinfo {title} {{Quasinormal modes and
  their overtones at the common horizon in a binary black hole merger}},\
  }\href {https://doi.org/10.1103/PhysRevD.103.044054} {\bibfield  {journal}
  {\bibinfo  {journal} {Phys. Rev. D}\ }\textbf {\bibinfo {volume} {103}},\
  \bibinfo {pages} {044054} (\bibinfo {year} {2021})},\ \Eprint
  {https://arxiv.org/abs/2010.15186} {arXiv:2010.15186 [gr-qc]} \BibitemShut
  {NoStop}%
\bibitem [{\citenamefont {Chen}\ \emph {et~al.}(2016)\citenamefont {Chen},
  \citenamefont {Wang},\ and\ \citenamefont {Yau}}]{ChenWangYau:2016dSAdS}%
  \BibitemOpen
  \bibfield  {author} {\bibinfo {author} {\bibfnamefont {P.-N.}\ \bibnamefont
  {Chen}}, \bibinfo {author} {\bibfnamefont {M.-T.}\ \bibnamefont {Wang}},\
  and\ \bibinfo {author} {\bibfnamefont {S.-T.}\ \bibnamefont {Yau}},\
  }\bibfield  {title} {\bibinfo {title} {Quasi-local energy with respect to de
  sitter/anti-de sitter reference},\ }\href@noop {} {\bibfield  {journal}
  {\bibinfo  {journal} {arXiv preprint arXiv:1603.02975}\ } (\bibinfo {year}
  {2016})}\BibitemShut {NoStop}%
\end{thebibliography}%

\end{document}